\DeclarePairedDelimiter\abs{\lvert}{\rvert}
\pgfplotsset{compat=1.12}
\DeclarePairedDelimiter{\ceil}{\lceil}{\rceil}
\DeclarePairedDelimiter{\floor}{\lfloor}{\rfloor}
\theoremstyle{definition}
\newtheorem{theorem}{Theorem}
\newtheorem{lemma}{Lemma}
\newtheorem{claim}{Claim}
\newtheorem{corollary}{Corollary}
\newtheorem{example}{Example}
\newtheorem{remark}{Remark}
\newtheorem{definition}{Definition}
\def\BibTeX{{\rm B\kern-.05em{\sc i\kern-.025em b}\kern-.08em
    T\kern-.1667em\lower.7ex\hbox{E}\kern-.125emX}}
\renewcommand*\env@matrix[1][*\c@MaxMatrixCols c]{%
  \hskip -\arraycolsep
  \let\@ifnextchar\new@ifnextchar
  \array{#1}}
\newcommand{\cmark}{\ding{51}}%
\newcommand{\xmark}{\ding{55}}
\newcommand{\threevdots}{%
  \vbox{\baselineskip1ex\lineskiplimit0pt%
  \hbox{.}\hbox{.}\hbox{.}}}
\newcommand\bovermat[2]{%
    \makebox[0pt][l]{$\smash{\overbrace{\phantom{%
                    \begin{matrix}#2\end{matrix}}}^{\text{#1}}}$}#2}
\newcommand\undermat[2]{
  \makebox[0pt][l]{$\smash{\underbrace{\phantom{%
    \begin{matrix}#2\end{matrix}}}_{\text{$#1$}}}$}#2}
\newcommand{\calB}{\mathcal{B}}
\newcommand{\calC}{\mathcal{C}}
\newcommand{\calG}{\mathcal{G}}
\newcommand{\calN}{\mathcal{N}}
\newcommand{\bfU}{\mathbf{U}}
\newcommand{\bfC}{\mathbf{C}}
\newcommand{\bfb}{\mathbf{b}}
\newcommand{\bfy}{\mathbf{y}}
\newcommand{\bfV}{\mathbf{V}}
\newcommand{\bfY}{\mathbf{Y}}
\newcommand{\bfW}{\mathbf{W}}
\newcommand{\bfA}{\mathbf{A}}
\newcommand{\bfG}{\mathbf{G}}
\newcommand{\bfu}{\mathbf{u}}
\newcommand{\bfx}{\mathbf{x}}
\newcommand{\bfc}{\mathbf{c}}
\newcommand{\bfB}{\mathbf{B}}
\newcommand{\bfz}{\mathbf{z}}
\newcommand{\bfR}{\mathbf{R}}
\newcommand{\aditya}[1]{\marginpar{+}{\bf Aditya's remark}: {\em #1}}
\begin{document}

\title{A Unified Treatment of Partial Stragglers and Sparse Matrices in Coded Matrix Computation}

\definecolor{mygr}{rgb}{0.6,0.4,0.0}
\definecolor{my1color}{rgb}{0.6,0.4,0.0}
\definecolor{mycolor1}{rgb}{0.00000,0.44700,0.74100}%
\definecolor{mycolor2}{rgb}{0.85000,0.32500,0.09800}%
\definecolor{mycolor3}{rgb}{0.45000,0.62500,0.19800}%
\tikzset{
block/.style    = {draw, thick, rectangle, minimum height = 2em, minimum width = 2em},
sum/.style      = {draw, circle, node distance = 1cm},
sum1/.style      = {draw, circle, minimum size = 1.1 cm},
input/.style    = {coordinate},
output/.style   = {coordinate},
}

\author{\IEEEauthorblockN{Anindya Bijoy Das and Aditya Ramamoorthy} 


\thanks{This work was supported in part by the National Science Foundation (NSF) under grants CCF-1910840 and CCF-2115200. The material in this work has appeared in part at the 2021 Information Theory Workshop (ITW), Kanazawa, Japan and the 2022 IEEE International Symposium on Information Theory (ISIT), Aalto University in Espoo, Finland.

Anindya Bijoy Das (das207@purdue.edu) was with the Dept. of Electrical and Computer Engineering at Iowa State University. He is now with the Dept. of Electrical and Computer Engineering, Purdue University, West Lafayette, IN 47907 USA.

Aditya Ramamoorthy (adityar@iastate.edu) is with the Dept. of Electrical and Computer Engineering, Iowa State University, Ames, IA 50011 USA.}
}

\IEEEtitleabstractindextext{%

\begin{abstract}
The overall execution time of distributed matrix computations is often dominated by slow worker nodes (stragglers) within the computation clusters. Recently, coding-theoretic techniques have been utilized to mitigate the effect of stragglers where worker nodes are assigned the job of processing encoded submatrices of the original matrices. In many machine learning or optimization problems the relevant matrices are often sparse. Several prior coded computation methods operate with dense linear combinations of the original submatrices; this can significantly increase the worker node computation times and consequently the overall job execution time. Moreover, several existing techniques treat the stragglers as failures (erasures) and discard their computations. In this work, we present a coding approach which operates with limited encoding of the original submatrices and utilizes the partial computations done by the slower workers. While our scheme can continue to have the optimal threshold of prior work, it also allows us to trade off the straggler resilience with the worker computation speed for sparse input matrices. Extensive numerical experiments done over cloud platforms confirm that the proposed approach enhances the speed of the worker computations (and thus the whole process) significantly.
\end{abstract}
}
\maketitle
\IEEEdisplaynontitleabstractindextext
\IEEEpeerreviewmaketitle
\section{Introduction}
\label{sec:intro}
Matrix computations are an indispensable part of several machine learning and optimization problems. The large scale dimensions of the matrices in these problems necessitates the usage of distributed computation where the whole job is subdivided into smaller tasks and assigned to multiple worker nodes. In these systems, the overall job execution time can be dominated by slower (or failed) worker nodes, which are referred to as stragglers. Recently, a number of coding theory techniques \cite{lee2018speeding,dutta2016short,yu2017polynomial, yu2020straggler, tandon2017gradient, 8849395, 8849468, 8919859, ramamoorthy2019numerically, das2019random, 9513242} have been proposed to mitigate the effect of stragglers for matrix-vector and matrix-matrix multiplications (see \cite{ramamoorthyDTMag20} for a tutorial overview). For example, \cite{lee2018speeding} proposes to compute $\bfA^T \bfx$, where $\bfA \in \mathbb{R}^{t \times r}$ and $\bfx \in \mathbb{R}^{t}$, in a distributed fashion by partitioning the matrix $\bfA$ into two block-columns as $\bfA = [\bfA_0 ~|~ \bfA_1]$, and assigning the job of computing $\bfA^T_0 \bfx$, $\bfA^T_1 \bfx$ and $\left(\bfA_0+\bfA_1\right)^T \bfx$, respectively, to three different workers. Thus, we can recover $\bfA^T \bfx$ if {\it any} two out of three workers return their results. This implies that the system is resilient to one straggler, while each of those workers has effectively half of the overall computational load. In general, we define the recovery threshold as the minimum number of workers ($\tau$) that need to finish their jobs such that the result $\bfA^T \bfx$ (for matrix-vector multiplication), or $\bfA^T \bfB$ (for matrix-matrix multiplication; where $\bfB \in \mathbb{R}^{t \times w}$) can be recovered from any subset of $\tau$ worker nodes.

While the recovery threshold is an important metric for coded computation, there are other important issues that also need to be considered.
First, in many of the practical examples, the matrices $\bfA$ and/or $\bfB$ can be sparse. If we linearly combine  $m$ submatrices of $\bfA$, then the density of the non-zero entries in the encoded matrices can be up to $m$-times higher than the density of $\bfA$; the actual value will depend on the underlying sparsity pattern. This can result in a large increase in the worker node computation time. As noted in \cite{wang2018coded}, the overall job execution time can actually go up rather than down. Thus, designing schemes that combine relatively few submatrices while continuing to have a good threshold is important. Second, much of prior work (see \cite{wang2018coded, das2020coded, kiani2018exploitation, mallick2018rateless, 9252114, 8683267, 8849451, hasirciouglu2020bivariate} for some exceptions) treats stragglers as erasures, and thus discards the computations done by the slower workers. But a slower worker may not be a useless worker and efficiently utilizing the partial computations done by the slower workers is of interest.

In this work, we propose an approach for both distributed matrix-vector and matrix-matrix multiplications which makes progress on both of the issues mentioned above. In our approach, many of the assigned submatrices within a worker node are uncoded; this makes the worker computations significantly faster. Moreover, our proposed approach can exploit the partial computations done by the slower workers. We emphasize that our approach continues to enjoy the optimal recovery threshold (see \cite{yu2017polynomial}) for storage fractions of the form $1/k_A$ and $1/k_B$ (for the worker nodes). Furthermore, we also present approaches to trade off the straggler resilience with number of submatrices of $\bfA$ and $\bfB$ that are encoded.

This paper is organized as follows. In Section \ref{sec:backrelsoc}, we discuss the problem setup, related works and summarize our contributions. Then, in Section \ref{sec:matmat}, we present the details of our scheme and discuss our results about straggler resilience, suitability for sparse input matrices and utilization of partial computation. Next, in Section \ref{sec:numexp}, we show the results of extensive numerical experiments and compare the performance of our proposed method with other methods. Finally, Section \ref{sec:conclusion} concludes the paper with a discussion of possible future directions.

\section{Problem Setup, Related Work and Summary of Contributions}
\label{sec:backrelsoc}
In this work by matrix-computations we refer to either matrix-vector multiplication or matrix-matrix multiplication. In the case of matrix-vector multiplication, we typically partition matrix $\bfA$ into submatrices, generate encoded submatrices and distribute a certain number of these encoded submatrices and the vector $\bfx$ to $n$ worker nodes depending on their storage capacities. Each worker node computes (in a specified order) the product of its assigned submatrices and the vector $\bfx$. Each time a product is computed, it is communicated to the central node. The central node is responsible for recovering $\bfA^T \bfx$ once enough products have been obtained from the workers. 

Matrix-matrix multiplication is a more general and challenging problem. Here, we first perform a block decomposition of matrices $\bfA$ and $\bfB$ with sizes $q \times u$ and $q \times v$ respectively. The $(i,j)$-th block of $\bfA$ is denoted $\bfA_{i,j}, 0 \leq i \leq q-1, 0 \leq j \leq u-1$ (similar notation holds for the blocks of $\bfB$). The central node carries out encoding on both the $\bfA_{i,j}$ submatrices and the $\bfB_{i,j}$ submatrices. In particular, although it calculates scalar linear combinations of the submatrices, it is not responsible for any of the computationally intensive matrix operations. Following this, it sends the coded submatrices of $\bfA$ and $\bfB$ to the workers. The worker nodes compute the corresponding pair-wise products (either all or some subset thereof) of the submatrices assigned to them in a specified order and send the results back to the central node which performs appropriate decoding to recover $\bfA^T \bfB$.  

\subsection{Problem Setup}
We assume that the system has $n$ workers each of which can store the equivalent of $\gamma_A = \frac{1}{k_A}$ and $\gamma_B = \frac{1}{k_B}$ fractions of matrices $\bfA$ and $\bfB$. It is assumed that some of the worker nodes will fail or will be too slow (which is often the case in real-life clusters); the number of such nodes is assumed to be $s$ (or less). 

\begin{definition}
\label{def:recthresh}
We define the recovery threshold as the minimum number of workers ($\tau$) that need to finish their jobs such that the result $\bfA^T \bfx$ (for matrix-vector multiplication, where $\bfA \in \mathbb{R}^{t \times r}$ and $\bfx \in \mathbb{R}^{t}$), or $\bfA^T \bfB$ (for matrix-matrix multiplication; where $\bfA \in \mathbb{R}^{t \times r}$ and $\bfB \in \mathbb{R}^{t \times w}$) can be recovered from any subset of $\tau$ worker nodes.
\end{definition}
 The recovery threshold of a scheme is said to be {\it optimal} if it is the lowest possible given the storage constraints of the worker nodes.  
 
In our approach, we partition matrix $\bfA$ into $\Delta_A = \textrm{LCM}(n, k_A)$ submatrices (block-columns) as $\bfA_0, \bfA_1, \bfA_2, \dots, \bfA_{\Delta_A - 1}$ where LCM indicates the least common multiple. We also partition matrix $\bfB$ into $\Delta_B = k_B$ submatrices (block-columns) as $\bfB_0, \bfB_1, \bfB_2, \dots, \bfB_{\Delta_B - 1}$ and set $\Delta = \Delta_A \Delta_B$. We denote the number of assigned submatrices from $\bfA$ and $\bfB$ to any worker as $\ell_A$ and $\ell_B$ respectively, so $\ell_A = \frac{\Delta_A}{k_A}$ and $\ell_B = \frac{\Delta_B}{k_B} = 1$. Any worker will compute all pairwise block-products, thus the worker will be responsible for computing $\ell = \ell_A \ell_B = \ell_A$ block-products. 

We say that any submatrix $\bfA_i$, for $i = 0, 1, \dots, \Delta_A-1$, appears within a worker node as an uncoded block if $\bfA_i$ is assigned to that worker as an uncoded submatrix. Similarly, $\bfA_i$ is said to appear within a worker node in a coded block if a random linear combination of some submatrices including $\bfA_i$ is assigned to that worker. At this point, we define the ``weight'' of the encoded submatrices as it will serve as an important metric for working with sparse input matrices.  

\begin{definition}
\label{def:weight}
We define the ``weight'' of the encoding as the number of submatrices that are linearly combined to arrive at the encoded submatrix.
\end{definition}

If the ``input'' submatrices $\bfA$ and/or $\bfB$ are sparse, the encoded submatrices will be denser and the density is proportional to the weight of the encoding. Computing the product of two dense matrices is more computationally expensive than computing the product of two sparse matrices.
Thus, low weight encodings are desirable to enhance the speed of overall computation in case  of sparse matrices.


In each worker node there are locations numbered $0, 1, \dots, \ell-1$ where $0$ indicates the top location and $\ell-1$ the bottom location. The worker node starts processing the assigned submatrix product at the top (location 0) and then proceeds downwards to location $\ell-1$. For this system, if the central node can decode $\bfA^T \bfB$ from {\it any} $Q$ block products (respecting the top-to-bottom computation order), we say that the scheme has the corresponding $Q/\Delta$ value. A smaller $Q/\Delta$ value of a system indicates that the system can utilize the partial computations of the slower workers more efficiently than a system with higher $Q/\Delta$ value. It is to be noted that in our problem there are a total of $\Delta$ submatrix products to be recovered. Hence, $Q/\Delta \geq 1$.

We shall use the terminology of symbols and submatrices (or submatrix-products) interchangeably at several places.
\subsection{Discussion of Related Work}
Several coded computation approaches \cite{lee2018speeding,dutta2016short,yu2017polynomial, yu2020straggler, tandon2017gradient, 8849395, 8849468, 8919859, ramamoorthy2019numerically, das2019random, 9513242} have been introduced in the literature of distributed matrix multiplication in recent years. Many of these ideas are presented in a tutorial fashion in \cite{ramamoorthyDTMag20}. We compare the properties of different coded matrix-multiplication schemes in Table \ref{compare}. Moreover in Table \ref{notation}, we outline the notations used in this paper. 

\begin{table*}[t]
\caption{{\small Comparison with existing works on distributed matrix-matrix multiplications. We did not add the works of \cite{mallick2018rateless, 8849395, keshtkarjahromi2018dynamic, c3les} in this table since they are applicable for matrix-vector multiplication only.}} 
\label{compare}
\begin{center}
\begin{small}
\begin{sc}
\begin{tabular}{c c c c c}
\hline
\toprule
\multirow{2}{1 cm}{Codes} & Optimal & Numerical & Partial & Sparsely\\
  & Threshold? & Stability? & Computation? & Coded?\\
 \midrule
Repetition Codes & \xmark & \cmark &  \xmark & \cmark\\ \hline
Prod. Codes \cite{lee2017high}, Factored Codes \cite{9513242}   &\xmark & \cmark  & \xmark & \xmark\\ \hline
Polynomial Codes \cite{yu2017polynomial}  & \cmark & \xmark & \xmark & \xmark\\ \hline
Biv. Hermitian Poly. Code \cite{hasirciouglu2020bivariate}  & \cmark & \xmark & \cmark & \xmark\\ \hline
OrthoPoly \cite{8849468}, RKRP code\cite{8919859} & \multirow{2}{0.3 cm}{\cmark} & \multirow{2}{0.3 cm}{\cmark}  & \multirow{2}{0.3 cm}{\xmark} & \multirow{2}{0.3 cm}{\xmark}\\
Conv. Code \cite{das2019random}, Circ. \& Rot. Mat. \cite{ramamoorthy2019numerically} & &  & & \\ \hline
$\beta$-level Coding \cite{das2020coded} & \xmark & \cmark & \cmark & \cmark\\ \hline
SCS Optimal Scheme \cite{das2020coded} & \cmark & \cmark & \cmark & \cmark\\ \hline
{\bf Proposed Scheme} & \cmark & \cmark & \cmark & \cmark\\

\bottomrule
\end{tabular}
\end{sc}
\end{small}
\end{center}
\end{table*}%

\begin{table*}[t]
\caption{{\small Notation Table}} 
\label{notation}
\begin{center}
\begin{small}
\begin{sc}
\begin{tabular}{c c c}
\hline
\toprule
notation & definition & description\\
 \midrule
$\gamma_A, \gamma_B$ & $\;\;\;$ storage fraction for $\bfA$ and $\bfB$, respectively & $\gamma_A = \frac{1}{k_A}, \gamma_B = \frac{1}{k_B}$ \\  \hline
 $n$ & number of total worker nodes & $n \geq k_A k_B$\\ \hline
 $s_m$ & number of maximum possible full stragglers & $s_m = n - k_A k_B$ \\ \hline
\multirow{2}{1 cm}{$\Delta_A, \Delta_B$} & number of block-columns that & $\Delta_A = \textrm{LCM}(n, k_A)$ \\ 
& $\bfA$ and $\bfB$, respectively, are partitioned into &  and $\Delta_B = k_B$\\  \hline
$\Delta$ & total number of unknowns that need to be recovered & $\Delta = \Delta_A \Delta_B$ \\ \hline
$\tau$ & recovery threshold of the scheme & $\tau \geq k_A k_B$ \\ \hline
$s$ & number of stragglers & $s = n - \tau$ \\ \hline
$x$ & relaxation in number of stragglers & $x = s_m - s$ \\ \hline
$y$ & reduction of weights in coded submatrices of $\bfA$ & $y = \floor{\frac{k_A x}{s_m}}$ \\ \hline
\multirow{2}{1 cm}{$\;\;\; Q$} & number of submatrix products that have to be computed & \multirow{2}{2 cm}{$\;\;\;  \;\;\; Q \geq \Delta$} \\ 
 & in the worst case to recover the intended result &  \\ \hline
$\zeta$ & weight for the encoding of $\bfB$ & $\zeta \leq k_B$ \\ \hline
\multirow{2}{2 cm}{$\;\;\;  \;\;\; \ell_u, \ell_c$} & number of uncoded and coded submatrices of $\bfA$ & $\ell_u = \frac{\Delta}{n}$ and \\
& assigned to every worker node & $ \; \; \ell_ u + \ell_c = \Delta_A /k_A$\\
\bottomrule
\end{tabular}
\end{sc}
\end{small}
\end{center}
\end{table*}%



With storage fractions $\gamma_A = 1/k_A$ and $\gamma_B = 1/k_B$ and $q=1$, the approach in \cite{yu2017polynomial} has a threshold $\tau = k_A k_B$ which is shown to be optimal \cite{yu2017polynomial}. It proceeds by creating encoded matrix polynomials whose coefficients correspond to the blocks of the input matrices and subsequently evaluating them at distinct points. The decoding process corresponds to polynomial interpolation. Moreover, there are other variants of polynomial code-based works \cite{8849468}, \cite{hasirciouglu2020bivariate}, random code-based approaches \cite{8919859}, or convolutional code-based methods \cite{8849395,das2019random} which are also resilient to optimal number of stragglers. It should be noted that there are several other approaches \cite{wang2018coded}, \cite{mallick2018rateless}, \cite{lee2017high} which are sub-optimal in terms of straggler resilience.

There are some works \cite{dutta2016short,dutta2019optimal,yu2020straggler,TangKR19} which consider the case of $q > 1$ in the block decomposition of the input matrices. While this can reduce the recovery threshold as compared to the case of $q = 1$, it comes at the cost of increased computational load at each of the workers. Moreover the communication load between the worker node and the central node also increases.

While much of the initial work on coded computation focused on the recovery threshold, subsequent work has identified other metrics that need to be considered as well. Here we discuss several such other concerns that have been discussed in this literature.

\noindent {\bf Sparsity of the ``input'' matrices:} There are practical applications in machine learning, optimization and other areas where the underlying matrices $\bfA$ and $\bfB$ are sparse. If we want to compute the inner product of two $n$-length vectors $\mathbf{a}$ and $\bfx$ where $\mathbf{a}$ has around $\delta n$ ($0 < \delta \ll 1$) non-zero entries, it takes $\approx 2 \delta n$ floating point operations (flops) as compared to $\approx 2 n$ flops in the dense case where $\delta \approx 1$. The linear encoding in several prior approaches \cite{yu2017polynomial, 8919859, yu2020straggler, 8849468, das2019random, hasirciouglu2020bivariate}  is dense, i.e., it significantly increases the number of non-zero entries in the encoded matrices which are assigned to the worker nodes. For example, in the approach of \cite{yu2017polynomial} the encoded matrices of $\bfA$ (respectively $\bfB$) are obtained by linearly combining $k_A$ (respectively $k_B$) submatrices. This in turn can cause the worker computation time to increase by up to $k_A k_B$ times, i.e., the advantage of distributing the computation may be lost. This underscores the necessity of considering schemes where the encoding only combines a limited number of submatrices.  


\noindent {\bf Numerical stability:} Another major issue of the polynomial based approaches \cite{yu2017polynomial} is numerical instability. It should be noted that the encoding and decoding algorithms within coded computation operate over the real field, although the corresponding techniques are borrowed from classical coding theory. Unlike the finite field, the recovery from a system of equations can be quite inaccurate in the real field if the corresponding system matrix is ill-conditioned. The polynomial code approach \cite{yu2017polynomial} uses Vandermonde matrices for the encoding process which are well-recognized to be ill-conditioned \cite{Pan16}.


A number of prior works \cite{8849468, 8919859, ramamoorthy2019numerically, das2019random, 8849451,  9174440} have addressed this issue and emphasized that the worst case condition number ($\kappa_{worst}$) of the decoding matrices over all different choices of $s$ stragglers is an important metric to be optimized.  The work of \cite{8849468} presents an approach within the basis of orthogonal polynomials, and demonstrates that $\kappa_{worst}$ of their schemes is at most $O(n^{2s})$. The approach in \cite{das2019random} proposes a random convolutional code based approach, and provides a computable upper bound on $\kappa_{worst}$, while the work of \cite{ramamoorthy2019numerically} leverages the properties of rotation matrices and circulant permutation matrices to upper bound $\kappa_{worst}$ by $O(n^{s+5.5})$. In terms of numerical stability, \cite{ramamoorthy2019numerically} provides the best results in numerical experiments. The work in \cite{8919859} presents an approach where they take random linear combinations of the submatrices to generate the coded submatrices (this was also suggested in Remark 8 of \cite{yu2017polynomial}) and shows the improvement over the polynomial code approach. Some other approaches \cite{8849395}, \cite{8849451} address this issue, but they are applicable to matrix-vector multiplication only.

\noindent {\bf Partial Stragglers:} The third issue in distributed computations is that several approaches \cite{yu2017polynomial, 8849468, 8919859, ramamoorthy2019numerically, das2019random} treat stragglers as erasures; in other words they assume that no useful information can be obtained from the slower worker nodes. But some recent works \cite{das2020coded, c3les, kiani2018exploitation, hasirciouglu2020bivariate} consider that a slow worker may not be a useless worker; rather exploiting these partial computations can enhance the speed of the overall job. In these approaches, multiple jobs are assigned to each of the worker nodes, so that the central node can leverage the partial computations. This naturally leads to the $Q/\Delta$-metric discussed previously; it was introduced in \cite{c3les} and discussed in-depth in \cite{das2020coded}. 
We present a detailed comparison with \cite{das2020coded} in Section \ref{sec:numexp}.

\subsection{Summary of Contributions}

The contributions of our work can be summarized as follows. 

\begin{itemize}
    \item For a system with $n$ workers each of which can store $\gamma_A = \frac{1}{k_A}$ fraction of matrix $\bfA$ and $\gamma_B = \frac{1}{k_B}$ fraction of matrix $\bfB$, we propose a coded matrix-matrix multiplication scheme which (i) is optimal in terms of straggler resilience ($s = n - k_A k_B$); (ii) can utilize the partial computations done by the slower worker nodes; and (iii) enhances the worker computation speed when the ``input'' matrices $\bfA$ and $\bfB$ are sparse. Specifically, several of the assigned submatrices in our scheme our uncoded.
    
    \item Our work allows us to trade off the straggler resilience with the weight of the encoding scheme. If the recovery threshold is relaxed to $\tau = k_A k_B + x$, then we can further reduce the weight of the encoded $\bfA$ submatrices while ensuring that the number of uncoded $\bfA$ submatrices remains the same (as in the optimal threshold case). We show that the coded submatrices will be linear combinations of $k_A - y$ uncoded submatrices; where $y = \floor{\frac{k_A x}{s_m}}$. Thus the worker computation speed can be enhanced in comparison to the case of $\tau = k_A k_B$ when $x = 0$. 
    
    \item We provide upper and lower bounds on the value of $Q$ for our scheme. We show that for $x = 0$, the bounds are the same. Moreover, we have demonstrated several numerical examples which show that the difference between the bounds is small even when $x > 0$.
    
    \item Our theoretical results are supported by extensive numerical experiments conducted on AWS clusters. Fig. \ref{over_comp_sparse}, depicts such a comparison in terms of overall computation time required by different approaches for sparse input matrices in a system of $n = 24$ worker nodes. We have simulated the stragglers in such a way that the slower workers have one-fifth of the speed of the non-straggling nodes. From Fig. \ref{over_comp_sparse}, it can be verified that our proposed approach requires significantly less overall computation time than the dense coded approaches when the ``input'' matrices are sparse. 
    
\end{itemize}

\begin{figure*}[t]
\begin{subfigure}
\centering
\resizebox{0.48\linewidth}{!}{

\definecolor{mycolor6}{rgb}{0.92941,0.69412,0.12549}%
\definecolor{mycolor7}{rgb}{0.74902,0.00000,0.74902}%
\definecolor{mycolor8}{rgb}{0.60000,0.20000,0.00000}%

\begin{tikzpicture}
\begin{axis}[%
width=5.1in,
height=3.003in,
at={(2.6in,0.85in)},
scale only axis,
xmin=0,
xmax=6,
xlabel style={font=\color{white!15!black}, font=\huge},
xlabel={Number of slower workers},
ymin=0,
ymax=18,
ytick={0, 4, 8, 12, 16},
xtick={0,1,2,3,4,5,6},
tick label style={font=\LARGE} ,
ylabel style={font=\color{white!15!black}, font=\LARGE},
ylabel={Overall computation time (in $s$)},
axis background/.style={fill=white},
xmajorgrids={true},
ymajorgrids={true},
legend style={at={(0.03,0.5)}, nodes={scale=1}, anchor=south west, legend cell align=left, align=left, draw=white!15!black,font = \LARGE}
]

\addplot [solid, color=blue, line width=2.0pt, mark=diamond, mark options={solid, blue, scale = 3}]
  table[row sep=crcr]{%
0	3.33\\
1	3.44\\
2	3.47\\
3	3.46\\
4	3.52\\
5	14.93\\
6	14.83\\
};
\addlegendentry{Polynomial code  \cite{yu2017polynomial}}

\addplot [dashed, color=mycolor6, line width=2.0pt, mark=*, mark options={solid, mycolor6, scale = 2}]
  table[row sep=crcr]{%
0	3.38\\
1	3.42\\
2	3.48\\
3	3.39\\
4	3.63\\
5	14.95\\
6	14.70\\
};
\addlegendentry{Ortho-Poly Code \cite{8849468}}

\addplot [dashed, color=mycolor1, line width=2.0pt, mark=*, mark options={solid, mycolor1, scale = 3}]
  table[row sep=crcr]{%
0	0.88\\
1	2.85\\
2	2.84\\
3	3.04\\
4	3.06\\
5	3.04\\
6	3.29\\
};
\addlegendentry{RKRP Code\cite{8919859}}

\addplot [dotted, color=red, line width=2.0pt, mark=o, mark options={solid, red, scale = 3}]
  table[row sep=crcr]{%
0	1.15\\
1	1.82\\
2	1.85\\
3	1.90\\
4	1.94\\
5	2.12\\
6	2.40\\
};
\addlegendentry{SCS optimal Scheme \cite{das2020coded}}

\addplot [dotted, color=mycolor7, line width=2.0pt, mark=*, mark options={solid, mycolor7, scale = 3}]
  table[row sep=crcr]{%
0	0.66\\
1	0.97\\
2   0.95\\
3	1.00\\
4	1.07\\
5	1.74\\
6	2.31\\
};
\addlegendentry{Proposed Scheme}

\end{axis}
\end{tikzpicture}%
}
\end{subfigure}
\begin{subfigure}
\centering
\resizebox{0.48\linewidth}{!}{

\definecolor{mycolor6}{rgb}{0.92941,0.69412,0.12549}%
\definecolor{mycolor7}{rgb}{0.74902,0.00000,0.74902}%
\definecolor{mycolor8}{rgb}{0.60000,0.20000,0.00000}%

\begin{tikzpicture}
\begin{axis}[%
width=5.1in,
height=3.003in,
at={(2.6in,0.85in)},
scale only axis,
xmin=0,
xmax=6,
xlabel style={font=\color{white!15!black}, font=\huge},
xlabel={Number of slower workers},
ymin=0,
ymax=50,
ytick={0, 12, 24, 36, 48},
xtick={0,1,2,3,4,5,6},
tick label style={font=\LARGE} ,
ylabel style={font=\color{white!15!black}, font=\LARGE},
ylabel={Overall computation time (in $s$)},
axis background/.style={fill=white},
xmajorgrids={true},
ymajorgrids={true},
legend style={at={(0.03,0.5)}, nodes={scale=1}, anchor=south west, legend cell align=left, align=left, draw=white!15!black,font = \LARGE}
]

\addplot [solid, color=blue, line width=2.0pt, mark=diamond, mark options={solid, blue, scale = 3}]
  table[row sep=crcr]{%
0	8.64\\
1	8.86\\
2	8.14\\
3	8.13\\
4	10.12\\
5	40.51\\
6	39.80\\
};
\addlegendentry{Polynomial code  \cite{yu2017polynomial}}

\addplot [dashed, color=mycolor6, line width=2.0pt, mark=*, mark options={solid, mycolor6, scale = 2}]
  table[row sep=crcr]{%
0	8.20\\
1	8.26\\
2	9.34\\
3	9.61\\
4	10.11\\
5	38.73\\
6	39.47\\
};
\addlegendentry{Ortho-Poly Code \cite{8849468}}

\addplot [dashed, color=mycolor1, line width=2.0pt, mark=square, mark options={solid, mycolor1, scale = 3}]
  table[row sep=crcr]{%
0	2.26\\
1	5.46\\
2	6.28\\
3	7.57\\
4	7.81\\
5	7.85\\
6	7.27\\
};
\addlegendentry{RKRP Code\cite{8919859}}

\addplot [dotted, color=red, line width=2.0pt, mark=o, mark options={solid, red, scale = 3}]
  table[row sep=crcr]{%
0	3.16\\
1	4.51\\
2	4.64\\
3	4.80\\
4	4.93\\
5	5.13\\
6	5.88\\
};
\addlegendentry{SCS optimal Scheme \cite{das2020coded}}

\addplot [dotted, color=mycolor7, line width=2.0pt, mark=*, mark options={solid, mycolor7, scale = 3}]
  table[row sep=crcr]{%
0	2.28\\
1	3.52\\
2	3.56\\
3	3.78\\
4	3.82\\
5	5.77\\
6	6.13\\
};
\addlegendentry{Proposed Scheme}

\end{axis}
\end{tikzpicture}%
}
\end{subfigure}

\caption{\small Comparison among different coded approaches in terms of overall computation time for different number of slower worker nodes when the ``input'' matrices are $98\%$ sparse (left) or $95\%$ sparse (right). The system has $n = 24$ worker nodes each of which can store $\gamma_A = 1/4$ and $\gamma_B = 1/5$ fraction of matrices $\bfA$ and $\bfB$, respectively, so the recovery threshold, $\tau = 20$. The slower workers are simulated in such a way so that they have one-fifth of the speed of the non-straggling workers.}
\label{over_comp_sparse}
\end{figure*}
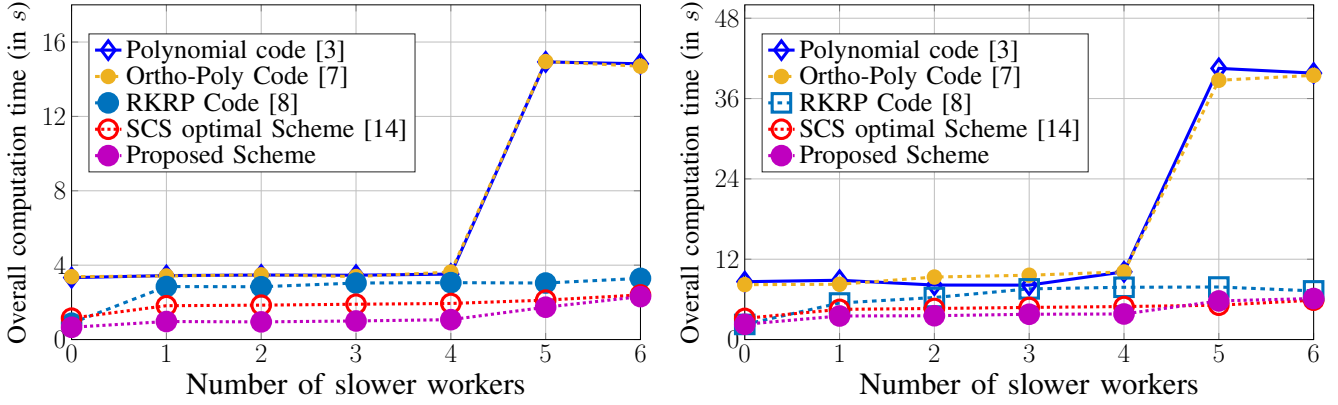

\subsection{Motivating Example}
\begin{example}
\label{ex:optsm}
\begin{figure}[t]
\centering
\resizebox{0.99\linewidth}{!}{
\begin{tikzpicture}[auto, thick, node distance=2cm, >=triangle 45]
\draw
    node [sum1, minimum size = 1.2cm, fill=green!30] (blk1) {$W_0$}
    node [sum1, minimum size = 1.2cm, fill=blue!30,right = 1.2 cm of blk1] (blk2) {$W_1$}
    node [sum1, minimum size = 1.2cm, fill=green!30,right = 1.2 cm of blk2] (blk3) {$W_2$}
    node [sum1, minimum size = 1.2cm, fill=blue!30,right = 1.2 cm of blk3] (blk4) {$W_3$}
    node [sum1, minimum size = 1.2cm, fill=green!30,right = 1.2 cm of blk4] (blk5) {$W_4$}

    node [block, minimum width = 2.2cm, fill=green!30,below = 0.4 cm of blk1] (blk11) {$\bfA_0$}
    node [block, minimum width = 2.2cm, fill=green!30,below = 0.0005 cm of blk11] (blk12) {$\bfA_1$}
    node [block, minimum width = 2.2cm, fill=green!30,below = 0.0005 cm of blk12] (blk13) {$\bfA_2$}
    node [block, minimum width = 2.2cm, fill=green!30,below = 0.0005 cm of blk13] (blk14) {$\bfA_3$}
    node [block,  minimum width = 2.2cm, fill=green!30,below = 0.0005 cm of blk14] (blk19) {$c_0 \bfA_4 + c_1 \bfA_9$}
    node [block,  minimum width = 2.2cm, fill=orange!50,below = 0.2 cm of blk19] (blk15) {$r_0 \bfB_0 + r_1 \bfB_1$}


    node [block, minimum width = 2.2cm, fill=blue!30,below = 0.4 cm of blk2] (blk21) {$\bfA_2$}
    node [block, minimum width = 2.2cm, fill=blue!30,below = 0.0005 cm of blk21] (blk22) {$\bfA_3$}
    node [block, minimum width = 2.2cm, fill=blue!30,below = 0.0005 cm of blk22] (blk23) {$\bfA_4$}
    node [block, minimum width = 2.2cm, fill=blue!30,below = 0.0005 cm of blk23] (blk24) {$\bfA_5$}
    node [block, minimum width = 2.2cm,  fill=blue!30,below = 0.0005 cm of blk24] (blk29) {$c_2 \bfA_6 + c_3 \bfA_1$}
    node [block,  minimum width = 1.5cm, fill=mycolor2!30,below = 0.2 cm of blk29] (blk25) {$r_2 \bfB_0 + r_3 \bfB_1$}

    node [block, minimum width = 2.2cm, fill=green!30,below = 0.4 cm of blk3] (blk31) {$\bfA_4$}
    node [block, minimum width = 2.2cm, fill=green!30,below = 0.0005 cm of blk31] (blk32) {$\bfA_5$}
    node [block, minimum width = 2.2cm, fill=green!30,below = 0.0005 cm of blk32] (blk33) {$\bfA_6$}
    node [block, minimum width = 2.2cm, fill=green!30,below = 0.0005 cm of blk33] (blk34) {$\bfA_7$}
    node [block, minimum width = 2.2cm,minimum width = 2.2cm, fill=green!30,below = 0.0005 cm of blk34] (blk39) {$c_4 \bfA_8 + c_5 \bfA_3$}
    node [block,  minimum width = 1.5cm,fill=orange!50,below = 0.2 cm of blk39] (blk35) {$r_4 \bfB_0 + r_5 \bfB_1$}

    node [block, minimum width = 2.2cm, fill=blue!30,below = 0.4 cm of blk4] (blk41) {$\bfA_6$}
    node [block, minimum width = 2.2cm, fill=blue!30,below = 0.0005 cm of blk41] (blk42) {$\bfA_7$}
    node [block, minimum width = 2.2cm, fill=blue!30,below = 0.0005 cm of blk42] (blk43) {$\bfA_8$}
    node [block, minimum width = 2.2cm, fill=blue!30,below = 0.0005 cm of blk43] (blk44) {$\bfA_9$}
    node [block, minimum width = 2.2cm, fill=blue!30,below = 0.0005 cm of blk44] (blk49) {$c_6 \bfA_0 + c_7 \bfA_5$}
    node [block,  minimum width = 1.5cm, fill=mycolor2!30,below = 0.2 cm of blk49] (blk45) {$r_6 \bfB_0 + r_7 \bfB_1$}

    node [block, minimum width = 2.2cm, fill=green!30,below = 0.4 cm of blk5] (blk51) {$\bfA_8$}
    node [block, minimum width = 2.2cm, fill=green!30,below = 0.0005 cm of blk51] (blk52) {$\bfA_9$}
    node [block, minimum width = 2.2cm, fill=green!30,below = 0.0005 cm of blk52] (blk53) {$\bfA_0$}
    node [block, minimum width = 2.2cm, fill=green!30,below = 0.0005 cm of blk53] (blk54) {$\bfA_1$}
    node [block, minimum width = 2.2cm, fill=green!30,below = 0.0005 cm of blk54] (blk59) {$c_8 \bfA_2 + c_9 \bfA_7$}  
    node [block,  minimum width = 1.5cm,fill=orange!30,below = 0.2 cm of blk59] (blk55) {$r_8 \bfB_0 + r_9 \bfB_1$}

    ;

\draw[->](blk1) -- node{} (blk11);
\draw[->](blk2) -- node{} (blk21);
\draw[->](blk3) -- node{} (blk31);
\draw[->](blk4) -- node{} (blk41);
\draw[->](blk5) -- node{} (blk51);

\end{tikzpicture}
}
\caption{\small Matrices $\bfA$ and $\bfB$ are partitioned into {\it ten} and {\it two} block-columns, respectively. Each worker is assigned four uncoded-coded and one coded-coded block-products. The coefficients $r_i$'s and $c_i$'s are chosen i.i.d. at random from a continuous distribution.}

\label{modopt_matmat}
\end{figure}
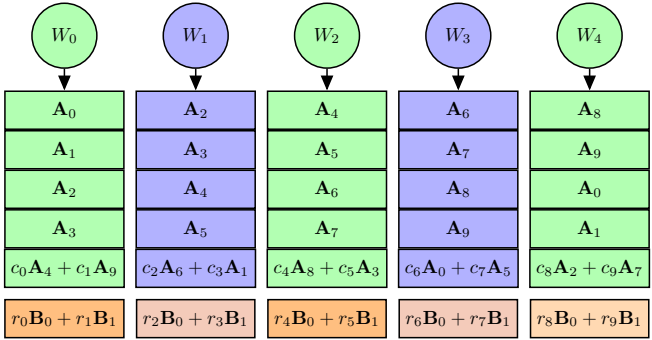 
Consider distributed matrix-matrix multiplication with $n = 5$ workers each of which can store $\gamma_A = \gamma_B = \frac{1}{2}$ portions of matrices $\bfA$ and $\bfB$, respectively. The job assignment according to our proposed approach is shown in Fig. \ref{modopt_matmat}, where matrices $\bfA$ and $\bfB$ are partitioned into $\Delta_A = 10$ and $\Delta_B = 2$ submatrices, respectively. In this case, each worker is responsible for computing all its corresponding pairwise block-products in a natural top-to-bottom order, e.g., worker $W_0$ computes $\bfA_0^T (r_0\bfB_0+r_1\bfB_1)$ followed by $\bfA_1^T (r_0\bfB_0+r_1\bfB_1), \dots, (c_0 \bfA_4 + c_1 \bfA_9)^T (r_0\bfB_0+r_1\bfB_1)$. Thus, each worker computes {\it four} uncoded-coded and {\it one} coded-coded block products. It can be verified that for these parameters, both the polynomial code approach and the approach in Fig. \ref{modopt_matmat} are resilient to {\it one} straggler.
\end{example}
We emphasize that the approach in Fig. \ref{modopt_matmat} has some notable advantages over the polynomial code approach. Suppose that matrices $\bfA$ and $\bfB$ are sparse. In the polynomial code approach, the encoded submatrices are approximately twice as dense as the original matrices. Thus, computing the product of the corresponding encoded matrices will take about twice as much time as compared to uncoded-coded block products in Fig. \ref{modopt_matmat}. 

Moreover, the latter approach can utilize the partial calculations done by the slower workers.  The polynomial code approach needs at least {\it four} workers to fully finish their respective assigned jobs. On the other hand, in Fig. \ref{modopt_matmat}, the final result can be recovered as soon as {\it any} $Q = 23$ block products are computed over all the workers according to the assigned computation order, where we recall that the central node needs to recover {\it twenty} submatrix products of the form $\bfA_i^T \bfB_j, 0 \leq i \leq 9, 0 \leq j \leq 1$. Thus, our approach allows us to leverage partial computations in scenarios where workers have differing speeds. It should be noted that the approach in \cite{das2020coded} provides $Q = 21$ for the same system, however, the coding for matrix $\bfA$ is denser in case of \cite{das2020coded}, which can lead to higher worker computation time.

\section{Matrix-matrix Multiplication Scheme}
\label{sec:matmat}
We now describe our proposed matrix-matrix multiplication scheme, beginning with an illustrative example that describes the encoding scheme and outlines the decoding scheme in case of worker node failures.



\subsection{An illustrative example}
\label{sec:illus_eg}
\input{new_xmpl_12}
Consider the example in Fig. \ref{new_matmat} where the system has $12$ worker nodes each of which can store $1/3$-rd fraction of $\bfA$ and $1/3$-rd fraction of $\bfB$. First we partition $\bfA$ into $12$ block-columns, and assign {\it three} uncoded and {\it one} coded submatrix of $\bfA$ to each of the worker nodes. The coded submatrices are always linear combinations of {\it three} uncoded submatrices. Similarly, we partition $\bfB$ into {\it three} block-columns, and assign only {\it one} coded $\bfB$ submatrix to each of the worker nodes; these are always linear combinations of {\it two} uncoded $\bfB$ submatrices.

There is a particular top-to-bottom order in which the tasks are executed within each worker node, e.g., in $W_0$ it is $\bfA_0^T (r_0\bfB_0 + r_1 \bfB_1), \bfA_1^T (r_0\bfB_0 + r_1 \bfB_1), \bfA_2^T (r_0\bfB_0 + r_1 \bfB_1)$ and finally $(c_{00} \bfA_3 + c_{01} \bfA_7+ c_{02}\bfA_{11})^T (r_0\bfB_0 + r_1 \bfB_1)$, where $c_{00}, c_{01}$ and $c_{02}$ represent the random coefficients for the encoded $\bfA$ submatrix in worker $W_0$.

To better understand the structure of the encoding scheme, consider the following class decomposition of $\bfA$ submatrices: $\calC_0 = \{\bfA_0, \bfA_4, \bfA_8\}, \calC_1 = \{\bfA_1, \bfA_5, \bfA_9\}, \calC_2 = \{\bfA_2, \bfA_6, \bfA_{10}\}, \calC_3 = \{\bfA_3, \bfA_7, \bfA_{11}\}$. The assignment of encoded $\bfA$ submatrices to each worker node are performed in a ``cyclic'' fashion, whereby a representative from each class is chosen for an uncoded assignment and a random linear combination of all members of the class is chosen for the coded assignment. For, instance in $W_0$ the three uncoded assignments are from (top-to-bottom) from classes $\calC_0, \calC_1$ and $\calC_2$ whereas the coded assignment is a random linear combination of the members of $\calC_3$. The sequence is shifted cyclically in $W_1$ and continues in a similar manner. The encoding of the $\bfB$ submatrices also follows a cyclic-pattern whereby $W_0$ contains a random linear combination of $\bfB_0$ and $\bfB_1$, $W_1$ contains a random linear combination of $\bfB_1$ and $\bfB_2$ and so on. 

It can be verified that each class (for example, $\calC_0$) appears in each worker node either as an uncoded assignment or as part of a coded assignment. 

It turns out that this scheme is resilient to the failure of any $s=3$ workers. While a general proof requires more ideas, we illustrate the recovery process by means of an example. Suppose $W_0$, $W_{10}$ and $W_{11}$ are failed so that all uncoded assignments of $\bfA_0$ are lost. Note that $\bfA_0 \in \calC_0$ and since we only encode the $\bfA$ submatrices within a certain class, it suffices for us to examine those submatrix products where class $\calC_0$ participates. From Fig. \ref{new_matmat} we can see that $\bfA_4 \in \calC_0$ and $\bfA_8 \in \calC_0$ participate in an uncoded manner in $W_2, W_3, W_4$ and $W_6, W_7, W_8$ respectively. The coded assigments corresponding to $\calC_0$ appear in $W_1, W_5$ and $W_9$.


Consider the following encoding matrices (with permuted columns for ease of viewing) that represent the coding coefficients for the submatrices of $\bfA$ and $\bfB$ respectively. The columns in left-to-right order correspond to worker nodes specified in the expression for $\bfW$ in (\ref{eq:worker_labels}) below. 

\begin{align}
\label{eq:GA_GB}
\centering
\bfG_A &= \begin{bmatrix}
\bovermat{Uncoded $\bfA_0$}{$1$ & $1$ & $1$} & & \bovermat{Uncoded $\bfA_4$}{$0$ & $0$ & $0$} & & \bovermat{Uncoded $\bfA_8$}{$0$ & $0$ & $0$} & & \bovermat{Coded $\calC_0$}{ * & * & *} \\
0 & 0 & 0 & & 1 & 1 & 1 & & 0 & 0 & 0 & & * & * & *\\
0 & 0 & 0 & & 0 & 0 & 0 & & 1 & 1 & 1 & & * & * & *\\
\end{bmatrix} ,\\ 
\bfG_B &= \begin{bmatrix}
* & 0 & * & & * & * & 0 & & * & 0 & * & & 0 & * & * \\
* & * & 0 & & 0 & * & * & & * & * & 0 & & * & 0 & * \\
0 & * & * & & * & 0 & * & & 0 & * & * & & * & * & 0 
\end{bmatrix} ,\\
\bfW &= \begin{bmatrix}
0 & 10 & 11 & $\,$ 2 & 3 & 4 & & 6 & 7 & 8 & & 1 & 5 & 9 $\,$ 
\end{bmatrix}. \label{eq:worker_labels}
\end{align}The asterisks in the above equations represent i.i.d. random choices from a continuous distribution.


Suppose for instance that $W_0, W_{10}$ and $W_{11}$ are stragglers. In this case one can observe that from $W_2, W_3$ and $W_4$ we can recover the products $\bfA_4^T \bfB_j$ for $j = 0,1,2$. This is because the corresponding system of equations for these three unknowns looks like
\begin{align}
    \begin{bmatrix}
 * & * & 0 \\
 0 & * & * \\
 * & 0 & * 
\end{bmatrix} ,
\end{align}
where the asterisks represent the chosen i.i.d. random coefficients. It can be seen that this matrix will be full rank with probability-1. In a similar manner we can argue that from $W_6, W_7$ and $W_8$ we can recover the products $\bfA_8^T \bfB_j, j = 0,1,2$. Following this, workers $W_1, W_5$ and $W_9$ allow the recovery of $\bfA_0^T \bfB_j, j = 0,1,2$. A more involved argument made in Section \ref{theorems} shows that in fact under any pattern of three stragglers the product $\bfA^T \bfB$ can be recovered.


The $Q/\Delta$ analysis is a bit more subtle. We divide the workers into three groups: $\calG_0 = \{ W_0, W_1, W_2, W_3 \}$, $\calG_1 = \{ W_4, W_5, W_6, W_7 \}$ and $\calG_3 = \{ W_8, W_9, W_{10}, W_{11} \}$. Within each group each class appears at all possible locations, e.g. within $\calG_0$, $\calC_0$ appears at location-0 in $W_0$, location-3 in $W_1$, location-2 in $W_2$ and location-1 in $W_3$ and so on. 
The appearances of $\calC_m$ in all different locations within a group allows us to leverage the properties of the cyclic assignment as done previously in \cite{c3les, das2020coded} and arrive at corresponding upper and lower bounds for $Q/\Delta$. We note here that the bounds match for $x=0$. As this is much more involved, we defer the argument to Section \ref{theorems}.



\subsection{Overview of Alg. \ref{Alg:New_Optimal_Matmat} }
\begin{algorithm}[t]
	\caption{Proposed scheme for distributed matrix-matrix multiplication}
	\label{Alg:New_Optimal_Matmat}
   \SetKwInOut{Input}{Input}
   \SetKwInOut{Output}{Output}
   \Input{Matrices $\bfA$ and $\bfB$, $n$-number of worker nodes, $s$-number of stragglers, storage fraction $\gamma_A = \frac{1}{k_A}$ and $\gamma_B = \frac{1}{k_B}$; $s \leq s_m = n - k_A k_B$.}
   Set $x = s_m - s$ and $y = \floor{\frac{k_A x}{s_m}}$\;
   Set $\Delta_A = \textrm{LCM}(n, k_A)$ and $\Delta_B = k_B$ and  Partition $\bfA$ and $\bfB$ into $\Delta_A$ and $\Delta_B$ block-columns, respectively\;
   Set $\Delta = \Delta_A \Delta_B$, $p =\frac{\Delta}{n}$ and $\ell = \frac{\Delta_A}{k_A}$\;
   Number of coded submatrices of $\bfA$ in each worker node, $\ell_c = \ell-p$\;
   Set $\omega = 1 + \ceil{\frac{s_m}{k_B}}$ and $\zeta = 1 + k_B - \Bigl\lceil{\frac{k_B}{\omega}\Bigr\rceil}$\; 
   Define $\calC_i = \left\lbrace \bfA_i, \, \bfA_{\ell+i}, \, \dots, \, \bfA_{(k_A-1)\ell+i} \right\rbrace$, and $\lambda_i = 0$, for $i = 0, 1, \dots, \ell-1$\;

   \For{$i\gets 0$ \KwTo $n-1$}{
   $u \gets i \times \frac{\Delta_A}{n}$\; 
   Define $T = \left\lbrace u, u+1, \dots, u + p - 1 \right\rbrace$ (modulo $\Delta_A$)\;
   Assign all $\bfA_{m}$'s sequentially from top to bottom to worker node $i$, where $m \in T$\;
   
   \For{$j\gets 0$ \KwTo $\ell_c - 1$}{
   $v \gets u + p + j$ (mod $\ell$)\;
   Denote $\bfY \in \calC_v$ as the set of the element submatrices at locations (modulo $k_A$) $\lambda_v, \lambda_v+1, \lambda_v+2, \dots, \lambda_v+k_A-y-1$ of $\calC_v$\; 
   Assign a random linear combination of $\bfA_{q}$'s where $\bfA_q \in \bfY$\;
   $\lambda_v \gets \lambda_v + k_A - y$ (modulo $k_A$)\;
   }

   Define $V = \left\lbrace i, i+1, \dots, i + \zeta - 1 \right\rbrace$ (modulo $\Delta_B$)\;
   Assign a random linear combination of $\bfB_{q}$'s where $\bfB_q \in \bfV$\;

   }
   \Output{$\langle n, \gamma_A, \gamma_B \rangle$-scheme for distributed matrix-matrix multiplication.}
\end{algorithm}


We now discuss the scheme specified formally in Algorithm \ref{Alg:New_Optimal_Matmat}. The symbols and notation introduced in Algorithm \ref{Alg:New_Optimal_Matmat} are summarized in Table \ref{notation}.

\noindent {\it Weight of the linear combination of $\bfA$ and $\bfB$ submatrices:} Note that $s_m = n - k_A k_B$ is the maximum number of stragglers that the scheme can be resilient to, whereas we want resilience to $s \leq s_m$ stragglers. Line 1 in Alg. \ref{Alg:New_Optimal_Matmat} sets the parameter $x= s_m - s$. Thus, $x$ measures the relaxation of the straggler resilience that we are able to tolerate. This allows us to reduce the weight of the linear combination of the $\bfA$ submatrices. In particular, let $y = \floor{\frac{k_A x}{s_m}}$. Then, our algorithm combines at most $k_A - y$ submatrices of $\bfA$. 

The encoded submatrices of $\bfB$ are obtained by combining $\zeta$ submatrices of $\{\bfB_0, \bfB_1, \dots, \bfB_{\Delta_B - 1}\}$. Line 5 specifies the assignment of $\zeta$; it can be observed that $\zeta \leq \Delta_B = k_B$.

\noindent {\it Assignment of encoded submatrices of $\bfA$:} We further divide the set $\{\bfA_0, \bfA_1, \dots, \bfA_{\Delta_A -1} \}$ into $\ell$ disjoint classes $\calC_0, \calC_1, \dots, \calC_{\ell-1}$, i.e.,
\begin{align}
\label{eq:equivalence}
    \calC_m = \left\lbrace \bfA_m, \, \bfA_{\ell+m}, \, \bfA_{2\ell+m}, \, \dots, \, \bfA_{(k_A-1)\ell+m} \right\rbrace.
\end{align} This implies that $|\calC_m| = k_A$, for $m= 0, 1, \dots, \ell-1$, and submatrix $\bfA_i$ belongs to $\calC_{i \, (\textrm{mod} \, \ell)}$. 

The worker nodes are assigned submatrices from each class $\calC_m, 0 \leq m \leq \ell-1$ in a block-cyclic fashion; the block shift is specified by $\Delta_A/n$ (line 8). In each worker node, the first $p = \Delta/n$ assignments are uncoded, i.e., they correspond to a specific element of the corresponding class. The remaining $\ell_c = \ell - p$ assignments are coded. Each coded assignment corresponds to random linear combination of an appropriate $(k_A - y)$-sized subset of the corresponding class. This is discussed in line 8 -- 16 in Alg. \ref{Alg:New_Optimal_Matmat}. 

As each location of every worker node is populated by a submatrix from a class $\calC_m$ where $0 \leq m \leq \ell-1$, we will occasionally say that the class $\calC_m$ appears at a certain location (between $0$ to $\ell-1$) at a certain worker node. To ensure that each submatrix of $\calC_m$ participates in ``almost'' the same number of coded assignments, we use a counter $\lambda_i$ to keep track of the linear combination that will be formed from the corresponding class $\calC_i,  0 \leq i \leq \ell-1$ (lines 6, 13 -- 15 in Alg. \ref{Alg:New_Optimal_Matmat}).

In Section \ref{sec:illus_eg}, we discussed an example where $x=0$ so that $y=0$. Therefore, the encoded $\bfA$ matrices combine all the three submatrices within the respective classes. There are $p=\Delta/n=3$ uncoded $\bfA$ submatrices and one coded $\bfA$ submatrix in each worker node.


\noindent {\it Assignment of encoded submatrices of $\bfB$:} For worker $W_i$, consider the set $V = \{i, i+1, \dots, i + \zeta -1\} \, (\textrm{mod} \; \Delta_B)$. A random linear combination of $\bfB_k$ for $k \in V$ is assigned to worker $W_i$. We note here that $\zeta \leq k_B$ and can in fact be as small as $\lceil k_B/2 \rceil$ depending upon the values of $k_B$ and $s_m$ (see line 5 of Alg. \ref{Alg:New_Optimal_Matmat}).

For instance, in Fig. \ref{new_matmat}, $s_m = 3$ and $k_B = 3$, so that $\omega = 2$ which implies that $\zeta = 2$. Thus, for instance for worker $W_8$, the set $V = \{8,9\}\, (\textrm{mod} \; \Delta_B) = \{2,0\}$ (where $\Delta_B = k_B$) and it is assigned a random linear combination of $\bfB_2$ and $\bfB_0$. Moreover, the patterns repeats periodically.

\noindent {\it Order of jobs:} Note that each worker node is only assigned one encoded $\bfB$ submatrix. Each worker node computes the product of its assigned $\bfA$ submatrices with the corresponding encoded $\bfB$ submatrix in the top to bottom order.


In the following subsections we point out certain ``structural'' properties of our scheme. In the presence of $s$ stragglers, suppose that there is a submatrix $\bfA_i^T \bfB_j$ where $\bfA_i \in \calC_m$ that we cannot decode. Our scheme is such that we can just focus on the equations where $\calC_m$ participates. This provides a manageable subset of equations where we can focus our attention. Different properties of the scheme (Lemma \ref{lem:new_opt_matmat} and Claim \ref{claim:appearB}) allow us to assert that the overall system of equations seen by submatrices $\bfA_i^T \bfB_j$ where $\bfA_i \in \calC_m$ and $j = 0, 1, \dots, \Delta_B-1$ is full-rank even in the presence of $s$ stragglers (Theorem \ref{thm:matmatstr}).

\subsection{Coding for Matrix $\bfA$}
\label{sec:codingA}

Let $\bfU_i$ denote the subset of worker nodes where $\bfA_i$ appears in an uncoded block, for $i = 0, 1, \dots, \Delta_A-1$. Likewise, $\bfV_i$ denotes the subset of worker nodes where $\bfA_i$ appears in a coded block. Our first claim states that the number of coded appearances of any two submatrices in a class can differ by at most $one$. The detailed proof is given in Appendix \ref{App:proofclaim1}.

\begin{claim}
\label{claim:diffV} 
If the jobs are assigned to the workers according to Alg. \ref{Alg:New_Optimal_Matmat}, for any $\bfA_i, \bfA_j \in \calC_m$,
\begin{align*}
\Bigl||\bfV_i| - |\bfV_j|\Bigr| \leq 1 .
\end{align*} 
\end{claim}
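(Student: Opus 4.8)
The plan is to reduce the statement to a single class and then analyze the dynamics of the per-class counter $\lambda_m$ defined in line 6 of Alg. \ref{Alg:New_Optimal_Matmat}. Fix a class $\calC_m$ and label its $k_A$ members by their \emph{location} within the class, so that $\bfA_{j\ell+m}$ sits at location $j \in \{0,1,\dots,k_A-1\}$. First I would observe that within any single worker the class $\calC_m$ is used exactly once: since $p+\ell_c = \ell$ and the uncoded and coded slots together draw from $\ell$ consecutive class-indices modulo $\ell$ (and $p \leq \ell$ because $n \geq k_A k_B$), every class appears in every worker precisely once, either uncoded or coded. Hence a coded appearance of $\calC_m$ in a worker is in one-to-one correspondence with one invocation of the inner loop (lines 11--15) for which $v=m$, and these invocations lie in distinct workers. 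Consequently $|\bfV_i|$, the number of workers in which $\bfA_i \in \calC_m$ appears inside a coded block, equals exactly the number of such invocations whose selected window of locations contains the location of $\bfA_i$.

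The key point is that $\lambda_m$ is a \emph{global, persistent} counter: it is initialized to $0$ once (line 6), outside the loop over workers, and is advanced by $w := k_A - y$ modulo $k_A$ after each coded assignment drawn from $\calC_m$ (line 15), while invocations with $v \neq m$ leave it untouched. Thus if $\calC_m$ is used in a coded block a total of $N_m$ times across all workers, the successive windows it selects are, in order of invocation,
\begin{align*}
\bigl\{ tw, \, tw+1, \, \dots, \, (t+1)w-1 \bigr\} \pmod{k_A}, \qquad t = 0, 1, \dots, N_m - 1 ,
\end{align*}
because the $t$-th invocation starts at $\lambda_m = tw \bmod k_A$ and covers $w$ consecutive locations (line 13). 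Viewed as integer intervals before reduction modulo $k_A$, these windows tile $\{0, 1, \dots, N_m w - 1\}$ with no gaps and no overlaps.

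It then remains to count, for each location $p \in \{0,\dots,k_A-1\}$, how many integers in $\{0, 1, \dots, N_m w - 1\}$ are congruent to $p$ modulo $k_A$; by the correspondence above this count is precisely $|\bfV_i|$ for the submatrix $\bfA_i$ sitting at location $p$. By the standard residue-counting fact, each residue class of $\mathbb{Z}_{k_A}$ receives either $\lfloor N_m w / k_A \rfloor$ or $\lceil N_m w / k_A \rceil$ of these integers, so any two location-counts differ by at most one. This gives $\bigl||\bfV_i| - |\bfV_j|\bigr| \leq 1$ for all $\bfA_i, \bfA_j \in \calC_m$, as claimed.

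I expect the main obstacle to be careful bookkeeping rather than any deep idea. The substantive points are purely about the algorithm's state: one must verify that $\lambda_m$ accumulates continuously across the entire worker loop (so the selected windows form one unbroken cyclic walk of total length $N_m w$ rather than restarting per worker), and that each coded invocation selects exactly $w$ consecutive locations of $\calC_m$. Once these two facts are pinned down, the $\leq 1$ balance follows immediately from the uniformity of covering $\mathbb{Z}_{k_A}$ by a contiguous block of $N_m w$ steps, and no randomness or genericity of the coefficients is needed.
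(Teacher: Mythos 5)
Your proposal is correct and follows essentially the same route as the paper's proof: both track the persistent per-class counter $\lambda_m$, note that successive coded assignments from $\calC_m$ select consecutive windows of length $k_A - y$ advancing cyclically from $0$, and conclude balanced coverage of the $k_A$ locations. The paper phrases the last step as a three-case comparison against the final counter position $\lambda_m^{end}$, which is exactly your residue-counting observation on the contiguous interval $\{0,\dots,N_m(k_A-y)-1\}$ in different words.
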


We now present a lemma which outlines the key properties of the structure of encoding submatrices of $\bfA$. It includes the details on how a given submatrix $\bfA_i$ and the different classes appear at different locations over all the worker nodes. The proof of this lemma is detailed in Appendix \ref{App:prooflemma1}.

\begin{lemma}
\label{lem:new_opt_matmat}
Assume that the jobs are assigned to the workers according to Alg. \ref{Alg:New_Optimal_Matmat}, and consider any submatrix $\bfA_i$, for $i = 0, 1, 2, \dots, \Delta_A - 1$. Then (i) $|\bfU_i| = k_B$, (ii) $|\bfV_i| \geq s$ and $\bfU_i \cap \bfV_i = \emptyset$, and (iii) a given class $\calC_m$, where $0 \leq m \leq \ell - 1$, appears at all different locations $0, 1, \dots, \ell-1$ within the worker nodes of any worker group $\calG_\lambda$, where $0 \leq \lambda \leq c-1$.
\end{lemma}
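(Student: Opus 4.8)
The plan is to first record one structural observation that drives all three parts, and then dispatch each part in turn. Writing $g = \Delta_A/n$ for the per-worker block shift and recalling $\ell = \Delta_A/k_A$, line~8 of Alg.~\ref{Alg:New_Optimal_Matmat} sets the starting index of worker $W_w$ to $u_w = wg$, and lines~9--15 then show that location $t \in \{0,\dots,\ell-1\}$ of $W_w$ is always populated by a representative of class $\calC_{(wg+t)\bmod \ell}$ (uncoded when $t<p$, coded otherwise). Hence each worker sees every class exactly once, the first $p$ uncoded and the remaining $\ell_c=\ell-p$ coded, and $\calC_m$ occupies the \emph{unique} location $(m-wg)\bmod\ell$ in $W_w$. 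I will also use two arithmetic facts: with $d=\gcd(n,k_A)$ and $\Delta_A=\textrm{LCM}(n,k_A)=nk_A/d$ one gets $g=k_A/d$ and $\ell=n/d$, which are coprime, so $\gcd(g,\ell)=1$; and $p=\Delta/n=k_Bg$, so $g\mid p$ with $p/g=k_B$.

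For part (iii) I would fix a class $\calC_m$ and a group $\calG_\lambda=\{W_{\lambda\ell},\dots,W_{\lambda\ell+\ell-1}\}$. Writing $w=\lambda\ell+t'$ with $t'\in\{0,\dots,\ell-1\}$, the location of $\calC_m$ in $W_w$ is $(m-wg)\bmod\ell=(m-t'g)\bmod\ell$, since $\lambda\ell g\equiv 0\pmod\ell$. Because $\gcd(g,\ell)=1$, the map $t'\mapsto t'g\bmod\ell$ is a bijection of $\{0,\dots,\ell-1\}$, so $(m-t'g)\bmod\ell$ sweeps out all of $\{0,\dots,\ell-1\}$; thus $\calC_m$ appears at every location within $\calG_\lambda$.

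For part (i) I note that $\bfA_i$ appears uncoded in $W_w$ exactly when $i\in\{u_w,\dots,u_w+p-1\}\ (\textrm{mod}\ \Delta_A)$, equivalently when the multiple $wg$ lies in the length-$p$ window $\{i-p+1,\dots,i\}\ (\textrm{mod}\ \Delta_A)$. As $w$ runs over $0,\dots,n-1$ the residues $wg\bmod\Delta_A$ are exactly the $n$ distinct multiples of $g$ in $[0,\Delta_A)$ (distinctness because $wg\equiv w'g$ forces $n\mid w-w'$), so $|\bfU_i|$ equals the number of multiples of $g$ among any $p$ consecutive integers. Partitioning such a window into $p/g=k_B$ consecutive blocks of length $g$, each containing exactly one multiple of $g$, gives $|\bfU_i|=k_B$.

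For part (ii), disjointness is immediate from the structural observation: $\calC_m$ occupies a single location in each worker, so $\bfA_i\in\calC_m$ can appear there either uncoded or inside the coded combination, never both, whence $\bfU_i\cap\bfV_i=\emptyset$. For the lower bound I would count appearances of the whole class $\calC_m$: it appears once in each of the $n$ workers, and by part (i) its $k_A$ members together account for $k_Ak_B$ uncoded appearances, leaving $n-k_Ak_B=s_m$ coded slots. Each coded slot combines $k_A-y$ members, so the coded member-incidences total $s_m(k_A-y)$, distributed among $k_A$ members; by Claim~\ref{claim:diffV} these counts differ pairwise by at most one, so the smallest is at least $\floor{s_m(k_A-y)/k_A}$. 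Finally $y=\floor{k_Ax/s_m}\le k_Ax/s_m$ yields $s_m(k_A-y)/k_A\ge s_m-x=s$, and since $s$ is an integer the floor is also $\ge s$, giving $|\bfV_i|\ge s$. The two counting arguments are routine; I expect the $|\bfV_i|\ge s$ bound to be the delicate step, since it must knit together the global count of coded slots per class, the near-uniform distribution from Claim~\ref{claim:diffV}, and the precise interplay between $y=\floor{k_Ax/s_m}$ and $s=s_m-x$ needed to push the floor past the integer $s$.
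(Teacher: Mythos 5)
Your proposal is correct and follows essentially the same route as the paper's proof: the same structural observation that location $t$ of worker $W_w$ carries class $\calC_{(wg+t)\bmod \ell}$, the same reliance on coprimality of $g=\Delta_A/n$ and $\ell$ for part (iii) (the paper phrases it as a contradiction, you as a bijection), the same window-counting for $|\bfU_i|=k_B$, and the identical part (ii) argument via $\sum_{i:\bfA_i\in\calC_m}|\bfV_i|=s_m(k_A-y)$, Claim \ref{claim:diffV}, and the bound $\floor{s_m - s_m y/k_A}\geq s_m-x=s$. No gaps.
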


\begin{example}
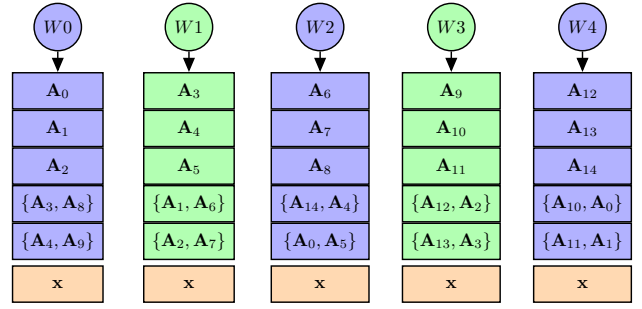
\begin{figure}[t]
\centering
\resizebox{0.95\linewidth}{!}{
\begin{tikzpicture}[auto, thick, node distance=2cm, >=triangle 45]

\draw

    node [sum, fill=blue!30] (blk1) {$W0$}
    node [sum, fill=green!30,right = 1.6 cm of blk1] (blk2) {$W1$}
    node [sum, fill=blue!30,right = 1.6 cm of blk2] (blk3) {$W2$}
    node [sum, fill=green!30,right = 1.6 cm of blk3] (blk4) {$W3$}
    node [sum, fill=blue!30,right = 1.6 cm of blk4] (blk5) {$W4$}

    node [block, fill=blue!30, minimum width = 5 em, below = 0.4 cm of blk1] (blk11) {$\bfA_0$}
    node [block, fill=blue!30, minimum width = 5 em, below = 0.0005 cm of blk11] (blk12) {$\bfA_1$}
    node [block, fill=blue!30, minimum width = 5 em, below = 0.0005 cm of blk12] (blk13) {$\bfA_2$}
    node [block, fill=blue!30, minimum width = 5em, below = 0.0005 cm of blk13] (blk14) {$\{\bfA_3, \bfA_8\}$}
    node [block, fill=blue!30, minimum width = 5em, below = 0.0005 cm of blk14] (blk16) {$\{\bfA_4, \bfA_9\}$}
    node [block, fill=orange!30,minimum width = 5 em,below =  0.1 cm of blk16] (blk17) {$\bfx$}

    node [block, fill=green!30,minimum width = 5 em,below = 0.4 cm of blk2] (blk21) {$\bfA_3$}
    node [block, fill=green!30,minimum width = 5 em,below = 0.0005 cm of blk21] (blk22) {$\bfA_4$}
    node [block, fill=green!30,minimum width = 5 em,below = 0.0005 cm of blk22] (blk23) {$\bfA_5$}
    node [block, fill=green!30,minimum width = 5em,below = 0.0005 cm of blk23] (blk24) {$\{\bfA_1, \bfA_6\}$}
	node [block, fill=green!30,minimum width = 5em,below = 0.0005 cm of blk24] (blk26) {$\{\bfA_2, \bfA_7\}$}
	node [block, fill=orange!30,minimum width = 5em,below =  0.1 cm of blk26] (blk27) {$\bfx$}
	
    node [block, fill=blue!30,minimum width = 5.4 em,below = 0.4 cm of blk3] (blk31) {$\bfA_6$}
    node [block, fill=blue!30,minimum width = 5.4 em,below = 0.0005 cm of blk31] (blk32) {$\bfA_7$}
    node [block, fill=blue!30,minimum width = 5.4 em,below = 0.0005 cm of blk32] (blk33) {$\bfA_8$}
    node [block, fill=blue!30,minimum width = 5.4 em,below = 0.0005 cm of blk33] (blk34) {$\{\bfA_{14}, \bfA_4\}$}
    node [block, fill=blue!30, below = 0.0005 cm of blk34, minimum width = 5.4 em] (blk36) {$\{\bfA_0, \bfA_5\}$}
	node [block, fill=orange!30,minimum width = 5.4 em, below =  0.1 cm of blk36] (blk37) {$\bfx$}
	
    node [block, fill=green!30,minimum width = 5.4 em, below = 0.4 cm of blk4] (blk41) {$\bfA_9$}
    node [block, fill=green!30,minimum width = 5.4 em,below = 0.0005 cm of blk41] (blk42) {$\bfA_{10}$}
    node [block, fill=green!30,minimum width = 5.4 em,below = 0.0005 cm of blk42] (blk43) {$\bfA_{11}$}
    node [block, fill=green!30,minimum width = 5.4 em,below = 0.0005 cm of blk43] (blk44) {$\{\bfA_{12}, \bfA_2\}$}
    node [block, fill=green!30,minimum width = 5.4 em, below = 0.0005 cm of blk44] (blk46) {$\{\bfA_{13}, \bfA_3\}$}
	node [block, fill=orange!30,minimum width = 5.4 em, below =  0.1 cm of blk46] (blk47) {$\bfx$}
	
	node [block, fill=blue!30,minimum width = 5.4 em, below = 0.4 cm of blk5] (blk51) {$\bfA_{12}$}
    node [block, fill=blue!30,minimum width = 5.4 em,below = 0.0005 cm of blk51] (blk52) {$\bfA_{13}$}
    node [block, fill=blue!30,minimum width = 5.4 em,below = 0.0005 cm of blk52] (blk53) {$\bfA_{14}$}
    node [block, fill=blue!30,minimum width = 5.4 em,below = 0.0005 cm of blk53] (blk54) {$\{\bfA_{10}, \bfA_0\}$}
    node [block, fill=blue!30,minimum width = 5.4 em, below = 0.0005 cm of blk54] (blk56) {$\{\bfA_{11}, \bfA_1\}$}
	node [block, fill=orange!30,minimum width = 5.4 em, below =  0.1 cm of blk56] (blk57) {$\bfx$}
;
\draw[->](blk1) -- node{} (blk11);
\draw[->](blk2) -- node{} (blk21);
\draw[->](blk3) -- node{} (blk31);
\draw[->](blk4) -- node{} (blk41);
\draw[->](blk5) -- node{} (blk51);

\end{tikzpicture}
}
\caption{\small Matrix-vector case with $n = 5$ workers and $s = 1$ stragglers, with $\gamma_A = \frac{1}{3}$. Here, $s_m = n - \frac{1}{\gamma_A} = 2$. In this figure, any $\{\bfA_i, \bfA_j\}$ means a linear combination of those submatrices, where the coefficients are chosen i.i.d. at random from a continuous distribution.}
\label{matvec}
\vspace{-0.1in}
\end{figure} 
To clarify the idea of the proof of Lemma \ref{lem:new_opt_matmat}(ii), We consider an example with distributed matrix-vector multiplication (which is equivalent to $k_B = 1$ in distributed matrix-matrix multiplication) in Fig. \ref{matvec} where we have a system with $n = 5$ workers, $\gamma_A = \frac{1}{3}$ and $s = 1$. We consider the class, $\calC_0 = \{ \bfA_0, \bfA_5, \bfA_{10} \}$. It can be verified from Fig. \ref{matvec} that 
\begin{align*}
    \mu_0 = \frac{|\bfV_0| + |\bfV_5| + |\bfV_{10}|}{3} = \frac{2 + 1 + 1}{3} = \frac{4}{3}.
\end{align*} So, $\floor{\mu_0} = 1$ and $\ceil{\mu_0} = 2$; which satisfies the inequality 
\begin{align*}
\floor{\mu_0} \leq |\bfV_0|, |\bfV_5|, |\bfV_{10}| \leq \ceil{\mu_0}.    
\end{align*} Thus, $|\bfV_i| \geq 1 = s$; for any $i = 0, 1, \dots, \Delta_A - 1$.
\end{example}

The following corollary states that the submatrices in $\calC_m$ are assigned to $k_A k_B$ distinct workers as uncoded blocks and to the remaining $s_m = n - k_A k_B$ workers as coded blocks. The proof appears in Appendix \ref{App:proofcorr2}.

\begin{corollary}
\label{cor:matmat}
If $\calC_m = \left\lbrace \bfA_m, \, \bfA_{\ell+m}, \, \dots, \, \bfA_{(k_A-1)\ell+m} \right\rbrace$, then
\begin{align*}
&(i) \;\; \abs*{ \left( \underset{i: \bfA_i \in \calC_m }{\cup} \bfU_i\right) } = k_A k_B \; \; \textrm{and} \;\; \abs*{ \left( \underset{i: \bfA_i \in \calC_m }{\cup} \bfV_i\right) } = s_m \, ; \\
(&ii)  \; \; \left( \underset{i: \bfA_i \in \calC_m }{\cup} \bfU_i\right)  \; \cap \;  { \left( \underset{i: \bfA_i \in \calC_m }{\cup} \bfV_i\right) } \; = \; \emptyset \, .
\end{align*} 
\end{corollary}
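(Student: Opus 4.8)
The plan is to reduce both parts of the corollary to a single structural fact: in \emph{every} worker node, the class $\calC_m$ occupies exactly one location, and that location is either an uncoded assignment (a single $\bfA_i \in \calC_m$) or a coded assignment (a random linear combination of a subset of $\calC_m$). Granting this, the $n$ workers partition into the set $\calW_u$ of workers where $\calC_m$ appears uncoded and the set $\calW_c$ of workers where it appears coded, with $\calW_u \cup \calW_c$ equal to all $n$ workers and $\calW_u \cap \calW_c = \emptyset$. I would then identify $\bigcup_{i:\bfA_i\in\calC_m}\bfU_i = \calW_u$ and $\bigcup_{i:\bfA_i\in\calC_m}\bfV_i = \calW_c$, at which point (ii) is immediate and (i) reduces to counting $|\calW_u|$.

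To establish the structural fact, I would read off the class indices directly from Alg.~\ref{Alg:New_Optimal_Matmat}. In worker $i$ with $u = i\,\Delta_A/n$, the $p$ uncoded submatrices lie in classes $\{u, u+1, \dots, u+p-1\}\ (\mathrm{mod}\ \ell)$, while the $\ell_c$ coded submatrices lie in classes $\{u+p, \dots, u+p+\ell_c-1\}\ (\mathrm{mod}\ \ell)$ (lines 9--15). Since $p + \ell_c = \ell$ with both $p \le \ell$ and $\ell_c \le \ell$ (using $n \ge k_A k_B$, so that $p/\ell = k_A k_B/n \le 1$), these two blocks of consecutive residues tile $\mathbb{Z}_\ell$ exactly once; hence each class, in particular $\calC_m$, appears in precisely one location of each worker, coded or uncoded. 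This is exactly the fact already asserted after the illustrative example (``each class appears in each worker node either as an uncoded assignment or as part of a coded assignment''), and it is the crux of the argument.

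With this in hand I would finish as follows. Because $\calC_m$ appears uncoded in a worker through at most one of its members, the sets $\bfU_i$ for distinct $\bfA_i \in \calC_m$ are pairwise disjoint, and each worker in $\calW_u$ lies in exactly one of them, giving $\bigcup_{i:\bfA_i\in\calC_m}\bfU_i = \calW_u$. By Lemma~\ref{lem:new_opt_matmat}(i) we have $|\bfU_i| = k_B$ for each of the $k_A$ members of $\calC_m$, so $|\calW_u| = \sum_{i:\bfA_i\in\calC_m}|\bfU_i| = k_A k_B$, which is the first equality of (i). Every remaining worker has its unique $\calC_m$-location coded and hence lies in $\bigcup_{i:\bfA_i\in\calC_m}\bfV_i = \calW_c$ (each coded $\calC_m$-block combines a nonempty subset of $\calC_m$, so it contributes its worker to at least one $\bfV_i$), giving $|\calW_c| = n - k_A k_B = s_m$, the second equality. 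Finally, since $\calW_u$ and $\calW_c$ partition the workers, $\bigl(\bigcup_i\bfU_i\bigr)\cap\bigl(\bigcup_i\bfV_i\bigr) = \calW_u \cap \calW_c = \emptyset$, which is (ii); this is also consistent with the per-worker relation $\bfU_i \cap \bfV_i = \emptyset$ from Lemma~\ref{lem:new_opt_matmat}(ii).

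The step I expect to require the most care is the tiling claim, namely verifying that the uncoded and coded class-index ranges are disjoint and jointly exhaustive modulo $\ell$. This hinges on the identities $p = \Delta/n$, $\ell = \Delta_A/k_A$, $\ell_c = \ell - p$, together with the inequality $n \ge k_A k_B$ that guarantees $p \le \ell$; once these are pinned down the remaining counting is routine. Since much of this tiling reasoning overlaps with the proof of Lemma~\ref{lem:new_opt_matmat}(iii), I would cite it rather than reprove it.
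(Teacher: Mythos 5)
Your proposal is correct and follows essentially the same route as the paper's own proof: both arguments hinge on the structural fact (established in the proof of Lemma~\ref{lem:new_opt_matmat}) that every worker node contains exactly one assignment from $\calC_m$, either uncoded or coded, which makes the $\bfU_i$'s pairwise disjoint, separates them from the $\bfV_i$'s, and reduces part (i) to the count $k_A \cdot k_B$ via Lemma~\ref{lem:new_opt_matmat}(i) and the complement count $n - k_A k_B = s_m$. Your explicit partition into $\calW_u$ and $\calW_c$ and the re-derivation of the tiling of residues modulo $\ell$ are just a more spelled-out presentation of the same reasoning.
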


\subsection{Coding for Matrix $\bfB$}
\label{sec:codingB}
To discuss the coding for matrix $\bfB$, first we consider a $k_B \times n$ matrix, where each column has $\zeta \leq k_B$ non-zero entries which are chosen i.i.d. from a continuous distribution. Moreover, the indices of non-zero entries are consecutive and shifted in a cyclic fashion, reduced modulo $k_B$. For example, if we have a system with $n = 12$ workers with $k_A = 2$ and $k_B = 5$, then $\zeta = 3$ and the corresponding coding matrix for $\bfB$, denoted as $\bfR^B_{k_B,n}$, can be written as 
\begin{align}
\label{eq:Grb}
    \bfR^B_{k_B,n} = \begin{bmatrix}[c c c c c c c c c c c c]
     * & 0 & 0 & * & * & * & 0 & 0 & * & * & * & 0 \\
     * & * & 0 & 0 & * & * & * & 0 & 0 & * & * & * \\
     * & * & * & 0 & 0 & * & * & * & 0 & 0 & * & *  \\
     0 & * & * & * & 0 & 0 & * & * & * & 0 & 0 & * \\
     0 & 0 & * & * & * & 0 & 0 & * & * & * & 0 & 0 \\
\end{bmatrix} \; ;
\end{align} where $*$ indicates the non-zero entries. The entries at indices $i, i+1, \dots, i+\zeta - 1$ (reduced modulo $k_B$) are non-zero (chosen i.i.d. from a continuous distribution) within column $i$ of $\bfR^B_{k_B,n}$ and the other entries are set to zero. The non-zero coefficients are used to specify the random linear combination of the submatrices of $\bfB$ assigned to worker $W_i$.

\begin{definition}
\label{def:typeB}
A type $i$ submatrix, for $i = 0, 1, 2 \dots, k_B - 1$, is a random linear combination of the submatrices, $\bfB_i, \bfB_{i+1}, \dots, \bfB_{i + \zeta - 1}$ (indices reduced modulo $k_B$). Thus we can say that worker node $W_j$ is assigned a type $j \,$ (mod $\; k_B)$ submatrix (line 18 of Alg. \ref{Alg:New_Optimal_Matmat}).
\end{definition}


Consider the case of $x = 0$ and any $\bfA_i$, $i = 0, 1, 2, \dots, \Delta_A - 1$. From Lemma \ref{lem:new_opt_matmat}, we know $|\bfU_i| = k_B$ and $|\bfV_i| = s_m$ (since $x = 0, s = s_m$). Thus $\bfA_i$ appears at $\sigma = k_B + s_m$ worker nodes. We now investigate the types   ({\it cf.} Def. \ref{def:typeB}) of the coded submatrices of $\bfB$ in those $\sigma$ worker nodes. The following claim specifies the types of those $\bfB$ submatrices, and the proof is given in Appendix \ref{App:proofclaim2}.

\begin{claim}
\label{claim:appearB}
Consider the construction in Alg. \ref{Alg:New_Optimal_Matmat} with $x=0$ and let $k$ be the minimum index of the worker node where $\bfA_i$ appears (uncoded or coded) and consider the worker nodes in $\bfU_i \cup \bfV_i$. The assigned submatrices of $\bfB$ for those worker nodes are, respectively, from types $k, k+1, k+2, \dots, k + \sigma - 1$ (reduced modulo $k_B$), which are $\sigma$ consecutive types.
\end{claim}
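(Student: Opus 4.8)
The plan is to translate the statement about $\bfB$-types into a purely combinatorial statement about the worker indices in $\bfU_i \cup \bfV_i$, and then exploit the block-cyclic structure of the $\bfA$-assignment. First I would record the consequences of $x=0$: by Lemma \ref{lem:new_opt_matmat} and Corollary \ref{cor:matmat}, every member of the class $\calC_m$ containing $\bfA_i$ shares the same coded support, so $\bfV_{i'} = \bfV_i$ for all $\bfA_{i'} \in \calC_m$, with $|\bfU_i| = k_B$, $|\bfV_i| = s_m$, and $\bfU_i \cap \bfV_i = \emptyset$; hence $\bfA_i$ occupies exactly $\sigma = k_B + s_m$ workers. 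Since worker $W_j$ is assigned a type $j \bmod k_B$ submatrix of $\bfB$ (Def. \ref{def:typeB}), the claim is equivalent to the following: if the workers of $\bfU_i \cup \bfV_i$ are listed in increasing index order as $j_0 < j_1 < \dots < j_{\sigma-1}$ with $j_0 = k$, then $j_r \equiv j_0 + r \pmod{k_B}$ for every $r$. In turn this reduces to showing that every gap $j_{r+1}-j_r$ is congruent to $1$ modulo $k_B$, i.e. that between consecutive occupied workers the number of unoccupied workers is a multiple of $k_B$.

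The key structural step is to show that, for each $\bfA_{i'} \in \calC_m$, the uncoded support $\bfU_{i'}$ is a block of $k_B$ consecutive worker indices (modulo $n$). Writing $g = \Delta_A/n$ for the block shift, the uncoded slots of $W_j$ are $\bfA_{jg}, \dots, \bfA_{jg + p - 1}$ (indices mod $\Delta_A$), where $p = \Delta/n = g\,k_B$. Because $j < n$ forces $jg < ng = \Delta_A$, the map $j \mapsto jg$ is a bijection from $\{0,\dots,n-1\}$ onto the multiples of $g$ in $[0,\Delta_A)$. Thus $\bfU_{i'}$ consists of those $j$ for which $jg$ lies in the length-$p$ window $\{i'-p+1,\dots,i'\}$ modulo $\Delta_A$; since this window has length $p = g\,k_B$, it contains exactly $k_B$ multiples of $g$, namely $k_B$ consecutive multiples, which correspond to $k_B$ consecutive values of $j$. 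This simultaneously recovers $|\bfU_{i'}| = k_B$ and shows that $\bfU_{i'}$ is a consecutive block.

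Next I would assemble the global picture. Each worker sees every class exactly once, since at location $t$ of $W_j$ the class index is $(jg+t)\bmod \ell$ and $t$ ranges over a full residue system mod $\ell$; consequently the $n$ workers partition as $\{0,\dots,n-1\} = \big(\bigsqcup_{b=0}^{k_A-1}\bfU_{m+b\ell}\big)\sqcup \bfV_i$, i.e. into the $k_A$ uncoded blocks of $\calC_m$ (disjoint, using $n \ge k_A k_B$ so that $p \le \ell$ and the windows do not overlap) together with the common coded set $\bfV_i$. Therefore the set of workers \emph{not} containing $\bfA_i$ is exactly $\bigcup_{i' \ne i}\bfU_{i'}$, a union of $k_A-1$ disjoint blocks of $k_B$ consecutive indices. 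Since these blocks are separated from one another either by coded workers (which lie in $\bfU_i \cup \bfV_i$) or by direct adjacency, every maximal run of unoccupied workers lying strictly between two consecutive elements of $\bfU_i \cup \bfV_i$ is a union of complete $k_B$-blocks, hence has length a multiple of $k_B$. This yields $j_{r+1}-j_r \equiv 1 \pmod{k_B}$ and the conclusion.

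The main obstacle I anticipate is the careful bookkeeping of the cyclic (mod $n$) versus linear (sorted) orderings. A single uncoded block may straddle the boundary between index $n-1$ and index $0$; I would need to verify that when such a straddling block belongs to the complement it lies entirely outside the interval $[j_0, j_{\sigma-1}]$ (its pieces sitting below the minimum and above the maximum of $\bfU_i\cup\bfV_i$), so that it never contributes a \emph{partial} internal gap and the ``multiple of $k_B$'' property is preserved for every interior gap. Establishing disjointness of the windows and the exact alignment of the blocks with the residues modulo $k_B$ — rather than merely their sizes — is the part that requires the most care.
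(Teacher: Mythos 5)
Your proposal is correct, and it rests on the same two structural pillars as the paper's own proof: (a) with $x=0$ we have $y=0$, so a worker at which $\bfA_i$ does \emph{not} appear must host a purely uncoded assignment of some other member of $\calC_m$, and (b) each uncoded support $\bfU_{i'}$ consists of $k_B$ cyclically consecutive worker indices. Where you differ is in the packaging: the paper runs a sequential scan starting at $W_k$, arguing that each time $\bfA_i$ disappears the scan can skip exactly $k_B$ workers (one full uncoded block of another class member) without changing the $\bfB$-type modulo $k_B$, and then says ``applying the argument recursively'' to finish; you instead set up the global partition $\{0,\dots,n-1\} = \bigl(\bigsqcup_{b}\bfU_{m+b\ell}\bigr)\sqcup\bfV_i$, observe that the complement of $\bfU_i\cup\bfV_i$ is a disjoint union of $k_A-1$ complete $k_B$-blocks, and conclude that every internal gap $j_{r+1}-j_r-1$ is a multiple of $k_B$, so $j_r \equiv k+r \pmod{k_B}$. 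Your global formulation buys two things the paper's version leaves implicit: it replaces the informal recursion with a one-shot counting argument, and it explicitly disposes of the cyclic-wraparound case (a complement block straddling workers $n-1$ and $0$ contains both endpoints, hence lies entirely below $j_0$ and above $j_{\sigma-1}$ and never intersects an internal gap). One small note: your closing worry about aligning the blocks with specific residues modulo $k_B$ is unnecessary — only the fact that each block has size exactly $k_B$ and is consecutive matters, which your argument already establishes; likewise, the identity $\bfV_{i'}=\bfV_i$ follows from the algorithm with $y=0$ (every coded block combines all $k_A$ class members) rather than from Lemma \ref{lem:new_opt_matmat} or Corollary \ref{cor:matmat} per se, though this is an attribution quibble, not a gap.
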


\subsection{Straggler Resilience and bounds on $Q/\Delta$}
\label{theorems}
\begin{lemma}
\label{lem:graphtheory}
Consider any $\bfA_i$, $i = 0, 1, 2, \dots, \Delta_A - 1$, for the case of $x = 0$. Construct a $k_B \times \sigma$ matrix $\bfR_i$ where the columns of $\bfR_i$ correspond to the coefficients for coded $\bfB$ submatrices of the worker nodes in $\bfU_i \cup \bfV_i$ and $\sigma = k_B + s_m$. If $\zeta > k_B - \Bigl\lceil{\frac{k_B}{\omega}\Bigr\rceil}$, any $k_B \times k_B$ submatrix of $\bfR_i$ is full rank, where $\omega = 1 + \ceil{\frac{s_m}{k_B}}$. 
\end{lemma}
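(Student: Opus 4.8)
The plan is to reduce the full-rank claim to a purely combinatorial matching question and then to verify a Hall-type condition using the arithmetic of $\omega$ and $\zeta$.

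First I would observe that every nonzero entry of $\bfR_i$ is an independent draw from a continuous distribution (columns sharing the same type have the same support but statistically independent entries). Hence, for any choice of $k_B$ columns, the determinant of the resulting $k_B \times k_B$ matrix is a polynomial in these independent indeterminates, and by the Leibniz expansion distinct permutations contribute monomials in disjoint sets of variables, so no cancellation can occur. Therefore the determinant is not identically zero -- equivalently, the matrix is full rank with probability one -- precisely when the bipartite graph $H$ with the $k_B$ rows on one side, the $k_B$ chosen columns on the other, and an edge whenever a row index lies in the support of a column, admits a perfect matching. By Claim \ref{claim:appearB}, the $\sigma = k_B + s_m$ columns of $\bfR_i$ carry the $\sigma$ consecutive types $k, k+1, \dots, k+\sigma-1 \;(\textrm{mod}\; k_B)$, and the support of a type-$t$ column is the cyclic arc $\{t, t+1, \dots, t+\zeta-1\}\;(\textrm{mod}\; k_B)$. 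Thus it suffices to exhibit a perfect matching in $H$ for every selection of $k_B$ columns.

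Next I would invoke Hall's theorem: a perfect matching exists iff $\abs{N(S)} \geq \abs{S}$ for every subset $S$ of the chosen columns, where $N(S)$ is the set of rows adjacent to $S$. Fix such an $S$, let $R$ be its set of distinct types and $d = \abs{R}$. Two bounds drive the argument. Since the $\sigma$ consecutive types repeat each residue at most $\omega = \ceil{\sigma / k_B} = 1 + \ceil{\frac{s_m}{k_B}}$ times, each type contributes at most $\omega$ columns, so $\abs{S} \leq \omega d$. On the other hand, $N(S)$ is a union of $d$ cyclic arcs of length $\zeta$ whose distinct left-endpoints are the residues of $R$; such a union is smallest when the residues are consecutive, which gives $\abs{N(S)} \geq \min(d + \zeta - 1, k_B)$.

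I would then split on the size of $d$. If $d + \zeta - 1 \geq k_B$, the arcs cover all rows, so $\abs{N(S)} = k_B \geq \abs{S}$ and Hall holds trivially. The remaining regime is $d \leq k_B - \zeta$; writing $L = \ceil{\frac{k_B}{\omega}}$, the hypothesis $\zeta > k_B - L$ yields $k_B - \zeta \leq L - 1$, so $d \leq L - 1$. It then remains to show $\omega d \leq d + \zeta - 1$, i.e. $(\omega - 1)d \leq \zeta - 1 = k_B - L$. Using $d \leq L-1$, this reduces to the clean inequality $(\omega - 1)(L - 1) \leq k_B - L$, which after rearrangement becomes $\omega(L-1) < k_B$, i.e. $L - 1 < k_B/\omega$, i.e. $\ceil{\frac{k_B}{\omega}} < \frac{k_B}{\omega} + 1$ -- an identity valid for every real number. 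This gives $\abs{S} \leq \omega d \leq d + \zeta - 1 \leq \abs{N(S)}$, establishing Hall's condition and hence the perfect matching.

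I expect the main obstacle to be the last arithmetic step: massaging the floor/ceiling expression for $\zeta$ so the Hall inequality collapses to the tautology $\ceil{k_B/\omega} < k_B/\omega + 1$, while being careful that the tight boundary case is $d + \zeta - 1 = k_B - 1$. A secondary subtlety worth spelling out is the lower bound $\abs{N(S)} \geq \min(d + \zeta - 1, k_B)$ for a union of equal-length cyclic arcs with distinct starting points (clustered arcs minimize the union), together with the justification that independence of the nonzero entries -- even across columns of identical support -- makes the existence of a single perfect matching sufficient for almost-sure full rank.
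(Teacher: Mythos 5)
Your proposal is correct and follows the same skeleton as the paper's proof: reduce almost-sure full rank of every $k_B \times k_B$ submatrix to the existence of a perfect matching in the row/column support bipartite graph, then verify Hall's condition using the consecutive-types structure guaranteed by Claim \ref{claim:appearB}. The differences lie in how Hall's condition is checked. The paper partitions the $\sigma$ columns into $\omega$ groups, each containing at most one column of each type, and then \emph{cites} the proofs of Theorems 7 and 8 of \cite{das2020coded} for the neighborhood formula $|\calN(\tilde{\calC})| = \min\bigl(\zeta + \lceil m/\omega \rceil - 1, k_B\bigr)$ with $m = |\tilde{\calC}|$, after which it imposes $\zeta \geq 1 + m - \lceil m/\omega \rceil$ for all $m \leq k_B$ via monotonicity in $m$. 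You instead prove the two ingredients behind that formula directly: each type contributes at most $\omega$ columns (so $|S| \leq \omega d$, with $d$ the number of distinct types in $S$), and a union of $d$ cyclic arcs of length $\zeta$ with distinct left endpoints has size at least $\min(d + \zeta - 1, k_B)$ --- the latter is indeed true and is cleanest via the gap decomposition: the $d$ cyclic gaps $g_1, \dots, g_d$ between consecutive starting points sum to $k_B$, the union has size $\sum_i \min(\zeta, g_i)$, and this is minimized when $d-1$ gaps equal $1$, i.e., when the starting points are consecutive. Your bookkeeping is dual to the paper's (you parameterize by $d$, the paper by $m$, linked by $d \geq \lceil m/\omega \rceil$), and your case-split arithmetic collapsing to the tautology $\lceil k_B/\omega \rceil < k_B/\omega + 1$ is equivalent to the paper's monotonicity-in-$m$ argument; both yield the threshold $\zeta \geq 1 + k_B - \lceil k_B/\omega \rceil$. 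What your route buys is self-containedness: it can be checked without consulting \cite{das2020coded}. Two small polish items: in the Leibniz step, distinct permutations yield \emph{distinct} monomials rather than monomials in disjoint variable sets (distinctness of monomials is what precludes cancellation); and the clustered-arc minimality should be justified by the gap argument above rather than asserted.
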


\begin{proof}
The proof is given in Appendix \ref{App:prooflemma2}.
\end{proof}

For any class $\calC_m$, the encoded submatrices of $\bfA$ within different worker nodes can be specified in terms of a $k_A \times n$ ``generator'' matrix, e.g., in Fig. \ref{modopt_matmat}, the column of the generator matrix for worker $W_0$ corresponding to $\calC_4 = \{ \bfA_4, \bfA_9\}$ will be $[c_0~~c_1]^T$. Similarly the encoded submatrices of $\bfB$ within different worker nodes can be specified in terms of a $k_B \times n$ ``generator'' matrix, e.g., in Fig. \ref{modopt_matmat}, the column of the generator matrix for worker $W_0$ will be $[r_0~~r_1]^T$. We use this formalism in the discussion below.

\begin{theorem}
\label{thm:matmatstr}
Assume that a system has $n$ worker nodes each of which can store $1/k_A$ and $1/k_B$ fraction of matrices $\bfA$ and $\bfB$, respectively. In each worker, according to Alg. \ref{Alg:New_Optimal_Matmat}, we assign some uncoded $\bfA$ submatrices and some coded $\bfA$ submatrices with weight $k_A - y$ , where $s_m = n - k_A k_B$ and $y = \floor{\frac{k_A x}{s_m}}$. We also assign a coded $\bfB$ submatrix to each worker which has a weight $\zeta$, as described in Lemma \ref{lem:graphtheory}. Then, this distributed matrix-matrix multiplication scheme will be resilient to $s = s_m - x$ stragglers.
\end{theorem}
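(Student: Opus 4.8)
The plan is to exploit that the encoding of $\bfA$ mixes submatrices only within a class $\calC_m$, so that recovering $\bfA^T\bfB$ decouples into $\ell$ independent subproblems, one per class. In subproblem $m$ we must recover the $k_A k_B$ unknowns $\{\bfA_i^T\bfB_j : \bfA_i \in \calC_m,\ 0 \le j \le k_B-1\}$. Since the construction gives every worker exactly one submatrix from each class (consistent with Lemma \ref{lem:new_opt_matmat}(iii)), every worker contributes exactly one equation to subproblem $m$: a worker holding the uncoded $\bfA_i \in \calC_m$ returns $\bfr_t^T$ applied to the block $(\bfA_i^T\bfB_0,\dots,\bfA_i^T\bfB_{k_B-1})$, while a worker holding a coded block returns a combination of all $k_A k_B$ unknowns with Kronecker-product coefficient vector $\bfg_t \otimes \bfr_t$, where $\bfg_t$ is the $\calC_m$-column of the $k_A\times n$ generator matrix of $\bfA$ and $\bfr_t$ the $\bfB$-column. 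Thus, for a fixed straggler set $\calS$ of size $s$, subproblem $m$ has coefficient matrix $\bfM_m$ of size $\tau\times k_A k_B$ whose surviving rows are these Kronecker products, and it suffices to show $\bfM_m$ has full column rank.

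Next I would reduce the whole statement to the existence of one good configuration. All nonzero coding coefficients are drawn i.i.d. from a continuous distribution, so every $(k_A k_B)\times(k_A k_B)$ minor of $\bfM_m$ is a polynomial in them; by the Schwartz-Zippel lemma it is nonzero with probability one as long as it is not the identically-zero polynomial. Hence it is enough to exhibit, for each class and each straggler set, a single assignment of the random coefficients together with a selection of $k_A k_B$ surviving workers making the resulting square submatrix invertible. Taking a union over the finitely many classes and the finitely many size-$s$ straggler sets keeps the bad event of measure zero, so one random draw succeeds simultaneously for all of them.

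The core is constructing that invertible submatrix, mirroring the peeling recovery of Fig. \ref{new_matmat}. Fix $\calC_m$ and $\calS$, and call $\bfA_i\in\calC_m$ \emph{safe} if all $k_B$ of its uncoded workers $\bfU_i$ survive and \emph{affected} otherwise. For a safe $\bfA_i$, select its $k_B$ surviving uncoded workers; by Lemma \ref{lem:graphtheory} the corresponding $k_B\times k_B$ block of $\bfB$-coefficients is invertible, so that block of unknowns is resolved and contributes a nonsingular diagonal block. For an affected $\bfA_i$ I would recruit its coded workers $\bfV_i$: Lemma \ref{lem:new_opt_matmat}(ii) gives $\abs{\bfV_i}\ge s$, Claim \ref{claim:diffV} keeps the coded appearances balanced across the class, and the counts of Corollary \ref{cor:matmat} ensure enough coded workers survive to cover all affected submatrices. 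Ordering the blocks so that safe submatrices precede affected ones and substituting the already-recovered safe blocks, the weight-$(k_A-y)$ coded combinations reduce the residual system to one on the affected blocks alone; choosing the coded coefficients so that this residual system is block-triangular with invertible $\bfB$-blocks on the diagonal (again Lemma \ref{lem:graphtheory}) yields an invertible square submatrix.

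The main obstacle is exactly this last step for the affected submatrices: one must produce, for \emph{every} straggler pattern, a system of distinct representatives matching each affected submatrix to a surviving coded worker whose weight-$(k_A-y)$ support can isolate it. This is where $y=\floor{\frac{k_A x}{s_m}}$ is essential — it is the largest weight reduction for which the number of surviving coded equations (at least $x$ more than in the $x=0$ case) still dominates the number of affected blocks produced by the $s=s_m-x$ stragglers, so a Hall-type matching respecting the cyclic support pattern exists. Verifying Hall's condition against the cyclic structure is the delicate combinatorial heart; the case $x=0$, where the weight is full, $\sigma=k_B+s_m$ consecutive types occur (Claim \ref{claim:appearB}), and Lemma \ref{lem:graphtheory} applies directly, is the clean special case from which the general argument is extended.
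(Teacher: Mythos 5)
Your overall route coincides with the paper's: decouple recovery into one subproblem per class $\calC_m$, view each surviving worker as a Khatri--Rao column $\bfg_t \otimes \bfr_t$ of a $k_Ak_B \times n$ generator $\bfG = \bfG_A \odot \bfG_B$, exhibit one specialization of the random coefficients that makes a $k_Ak_B\times k_Ak_B$ submatrix nonsingular, and conclude with probability $1$ by polynomial non-vanishing; this is exactly Theorem \ref{thm:matmatstr} plus Lemma \ref{lem:Gmmproof} in the paper, where your ``safe/affected'' split appears as the per-member counts $\delta_i$ of straggling uncoded workers. The problem is that you leave the one step that actually needs proof --- the matching of affected blocks to surviving coded workers --- as an assertion, and moreover you state that matching problem incorrectly. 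An affected $\bfA_i$ that loses $\delta_i$ of its $k_B$ uncoded workers needs $\delta_i$ \emph{distinct} surviving coded workers whose support contains $\bfA_i$, one per missing equation; a system of distinct representatives assigning a single coded worker to each affected submatrix is insufficient whenever some $\delta_i \geq 2$, which occurs as soon as two stragglers both hold an uncoded copy of the same $\bfA_i$ (possible for $k_B \geq 2$, $s \geq 2$).

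The Hall condition you defer as ``the delicate combinatorial heart'' is in fact settled by a short count, which is what the paper does. For class $\calC_m$, the number of surviving coded columns is $\eta = \sum_i \delta_i + x$ (since $\tau = k_Ak_B + x$ workers survive and $k_Ak_B - \sum_i\delta_i$ of them hold uncoded members of $\calC_m$). By Lemma \ref{lem:new_opt_matmat}(ii) --- and this is precisely where $y = \floor{\frac{k_A x}{s_m}}$ enters --- each member appears in at least $s = s_m - x$ of the $s_m$ coded columns of the class, hence in at least $s - (s_m - \eta) = \eta - x = \sum_i \delta_i$ of the \emph{surviving} ones. So every index's individual supply already dominates the \emph{total} demand $\sum_i \delta_i$, Hall's condition holds trivially, and a greedy assignment works; no case analysis on the cyclic support pattern and no extension from the $x=0$ case is required. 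Separately, when you invoke Lemma \ref{lem:graphtheory} for the affected blocks at $x>0$, note that Lemma \ref{lem:graphtheory} and Claim \ref{claim:appearB} are stated only for $x=0$; the argument still goes through because the coded workers of $\calC_m$ are the same $s_m$ workers regardless of $y$ (Corollary \ref{cor:matmat}), so the $k_B$ selected $\bfB$-columns for each block are drawn from the same $\sigma = k_B + s_m$ consecutive-type pool and any $k_B\times k_B$ submatrix of that pool is full rank --- a point your write-up glosses over and the paper uses implicitly.
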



\begin{proof}
We assume that there are $s = s_m - x$ stragglers, and we cannot recover an unknown $\bfA_i^T \bfB_j$ from the remaining $\tau = k_A k_B + x$ workers. Let $\bfA_i \in \calC_m$, for some $m = 0, 1, 2, \dots, \ell - 1$. From Lemma \ref{lem:new_opt_matmat} part (i), $|\bfU_m| = |\bfU_{\ell+m}| = \dots = |\bfU_{(k_A-1)\ell+m}| = k_B$. Thus, from Corollary \ref{cor:matmat}, without loss of generality, by permuting the columns appropriately, the $k_A \times n$ generator matrix for the corresponding submatrices of $\calC_m$ can be expressed as
\begin{align*}
    \bfG_A  =   \begin{bmatrix}[cccc|cc]
\; \mathbf{1}_{k_B}  & \mathbf{0} &\dots & \mathbf{0} &  \\
\;\mathbf{0} & \mathbf{1}_{k_B}  &\dots & \mathbf{0} & \\
\;\mathbf{0} & \mathbf{0}  &\dots & \mathbf{0} & \bfR^A_{k_A,s_m} \\
\;\threevdots & \threevdots & \dots & \threevdots  &  \\
\;\mathbf{0} & \mathbf{0}  & \dots & \mathbf{1}_{k_B} &
\end{bmatrix}, 
\end{align*} where $\mathbf{1}_{k_B}$ represents an all-one row-vector of length $k_B$. An example of this is shown in Section \ref{sec:illus_eg}. Here $\bfR^A_{k_A,s_m}$ is a random matrix of size $k_A \times s_m$ whose each column has $k_A - y$ non-zero entries (and each row has at least $s = s_m - x$ non-zero entries) where $y$ is defined in Line 1 in Alg. \ref{Alg:New_Optimal_Matmat}. The first $k_A k_B$ columns of $\bfG_A$ denote the uncoded submatrices of $\calC_m$ and the next $s_m = n - k_A k_B$ columns denote the coded submatrices. Similarly, the generator matrix for the corresponding coded submatrices of $\bfB$ is $\bfG_B  = \bfR^B_{k_B,n}$ (as mentioned in \eqref{eq:Grb}). Thus, the generator matrix for the unknowns of the form $\bfA^T_{\alpha} \bfB_{\beta}$ (where $\bfA_{\alpha} \in \calC_m$, $\beta = 0, 1, \dots, \Delta_B - 1$) is given by $\bfG = \bfG_{A} \odot \bfG_{B}$ ($\odot$ denotes the Khatri-Rao product \cite{zhang2017matrix} which corresponds to column-wise Kronecker product) which is of size $k_A k_B \times n$. The following lemma states a relevant rank property of $\bfG$, and the corresponding proof is given in Appendix \ref{sec:Gmmproof}.

\begin{lemma}
\label{lem:Gmmproof}
Any $k_A k_B \times \tau$ submatrix of $\bfG$ has a rank $k_A k_B$ with probability $1$, where $\tau = k_A k_B + x$.
\end{lemma}

The unknowns corresponding to $\calC_m$ can be represented in terms of the following Kronecker product as
\begin{align*}
    \bfy &=  \begin{bmatrix} \bfA_m^T   &  \bfA_{\ell+m}^T  \, \dots  \bfA_{(k_A-1)\ell+m}^T \end{bmatrix} \otimes \begin{bmatrix} \bfB_0 & \bfB_1 \, \dots  \bfB_{k_B- 1}\end{bmatrix} \\ 
    & = \begin{bmatrix}
     \bfA_m^T \bfB_0 & \bfA_m^T \bfB_1 & \dots & \dots & \bfA_{(k_A-1)\ell+m}^T \bfB_{k_B-1}
    \end{bmatrix},
\end{align*} thus there are $k_A k_B$ such unknowns of the form $\bfA_{\alpha}^T \bfB_{\beta}$, where $\bfA_{\alpha} \in \calC_m$. Note that $\bfA_i^T \bfB_j$ is also one of them which is assumed to be not decodable from the $\tau$ workers. But from Lemma \ref{lem:Gmmproof}, we can show that any $k_A k_B \times \tau$ submatrix of $\bfG$ has a rank $k_A k_B$ with probability $1$, which indicates that all $k_A k_B$ unknowns corresponding to $\calC_m$ can be recovered from any $\tau$ workers. This contradicts our assumption that $\bfA_i^T \bfB_j$ is not decodable. 
\end{proof}

Let us consider the example in Fig. \ref{modopt_matmat}, and apply the argument for unknown $\bfA_0^T \bfB_0$. Here we have $\ell = 5$, so $\bfA_0 \in \calC_0 = \{\bfA_0, \bfA_5 \}$. Thus to recover the unknowns of the form $\bfA^T_{\alpha} \bfB_{\beta}$ corresponding to $\calC_0$, we have the corresponding generator matrices as
\begin{align*}
    \bfG_A &= \begin{bmatrix} 1 & 1 & 0 & 0 & c_6 \\
    0 & 0 & 1 & 1 & c_7 \\
    \end{bmatrix},  \; \; \; \textrm{and} \\
    \bfG_B &= \begin{bmatrix} r_0 & r_8 & r_2 & r_4 & r_6 \\
    r_1 & r_9 & r_3 & r_5 & r_7 \\
    \end{bmatrix};
\end{align*} where the columns correspond to $W_0, W_4, W_1, W_2$ and $W_3$, respectively. Next we can have the generator matrix $\bfG = \bfG_{A} \odot \bfG_{B}$ having a size $4 \times 5$ whose any $4 \times 4$ square submatrix is full-rank. Thus we can recover the unknowns, $\bfA_0^T \bfB_0, \bfA_0^T \bfB_1, \bfA_5^T \bfB_0$ and $\bfA_5^T \bfB_1$ from the results returned by any four workers.

We now present the result of our work on utilizing the partial computations. It provides the calculation of the value of $Q$ for our scheme for different system parameters. Before stating the corresponding theorem, we state the following claim, which follows from \cite{das2020coded}. The proof is detailed in Appendix \ref{app:proofclaim3}.
\begin{claim}
\label{claim:Qprev}
Assume that the jobs are assigned to the workers according to Alg. \ref{Alg:New_Optimal_Matmat}, and consider any class $\calC_m$, for $m = 0, 1, \dots, \ell - 1$. The maximum number of submatrix-products that can be acquired from the job assignments where $\calC_m$ appears exactly $\kappa - 1$ times is
\begin{align*}
 \eta = \frac{n (\ell - 1 )}{2} + c \sum\limits_{i=0}^{c_1 - 1} (\ell - i) + c_2 (\ell - c_1) \, ;
\end{align*}  where $c = \frac{n}{\ell}$, $c_1 = \floor{\frac{\kappa - 1}{c}}$, and $c_2 = \kappa - 1 - c c_1$.   
\end{claim}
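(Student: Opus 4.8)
The plan is to recast the statement as a counting problem over the locations at which $\calC_m$ sits in the various workers, and then to solve a one-line greedy optimization. The structural input I would invoke is Lemma~\ref{lem:new_opt_matmat}(iii): within each of the $c = n/\ell$ groups $\calG_0,\dots,\calG_{c-1}$, the class $\calC_m$ occupies each of the $\ell$ locations $0,1,\dots,\ell-1$ exactly once. Since $\calC_m$ appears exactly once in every worker, this means that across all $n = c\ell$ workers precisely $c$ of them have $\calC_m$ at location $j$, for each $j \in \{0,1,\dots,\ell-1\}$. This location census is the only combinatorial fact the argument needs.

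Next I would set up the optimization. Because each worker computes top-to-bottom, a worker whose $\calC_m$-product sits at location $j$ reaches that product only after finishing its first $j$ products (those at locations $0,\dots,j-1$). To realise the event ``$\calC_m$ appears exactly $\kappa-1$ times'' while harvesting as many products as possible, I would pick a set $S$ of exactly $\kappa-1$ workers in which the $\calC_m$-product is computed: in each such worker we may proceed all the way to the bottom and collect all $\ell$ products, whereas in each of the remaining $n-(\kappa-1)$ workers we may collect at most the $j$ products lying strictly above the $\calC_m$-location. Writing $j_w$ for the $\calC_m$-location of worker $w$, the harvested count is
\begin{align*}
\eta \;=\; \sum_{w\in S}\ell \;+\; \sum_{w\notin S} j_w \;=\; \sum_{\text{all }w} j_w \;+\; \sum_{w\in S}(\ell - j_w),
\end{align*}
where the second equality just adds and subtracts $\sum_{w\in S} j_w$.

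The evaluation now splits cleanly. The first sum is fixed by the census: with $c$ workers at each location, $\sum_{\text{all }w} j_w = c\sum_{j=0}^{\ell-1} j = n(\ell-1)/2$. The second sum is maximised over the choice of $S$; as each summand $\ell - j_w$ decreases with the location, the optimal $S$ greedily takes the workers of smallest location---filling locations $0,1,\dots,c_1-1$ completely ($c$ workers apiece) and then taking $c_2 = (\kappa-1)-c c_1$ workers at location $c_1$, with $c_1 = \lfloor(\kappa-1)/c\rfloor$. This yields $c\sum_{i=0}^{c_1-1}(\ell-i) + c_2(\ell-c_1)$, and adding the two pieces reproduces the stated $\eta$ exactly.

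The step that deserves the most care---and the main obstacle---is justifying that this greedy, freely-chosen selection is genuinely optimal with no hidden inter-worker coupling. Concretely, I must argue that (i) in an ``included'' worker there is never a reason to stop short of the bottom, since collecting all $\ell$ products only raises the count without altering how many times $\calC_m$ has appeared; (ii) in a ``non-included'' worker the cap of $j_w$ products is hard, because collecting one more would force the $\calC_m$-product; and (iii) an exchange argument---swapping a high-location worker out of $S$ in favour of an unused lower-location worker strictly increases $\sum_{w\in S}(\ell - j_w)$---confirms that smallest-location workers form the optimal $S$. Once these points are in place the remaining computation is the trivial summation above, and the match with \cite{das2020coded} follows because the cyclic location census here is identical to the one exploited there.
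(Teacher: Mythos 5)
Your proposal is correct and follows essentially the same route as the paper: the paper's count $c\alpha_0 = n(\ell-1)/2$ is exactly your $\sum_w j_w$ (products lying above the $\calC_m$-location in every worker), its vector $\bfz$ lists exactly your increments $\ell - j_w$, and taking the first $\kappa-1$ entries of the non-increasing $\bfz$ is precisely your greedy choice of smallest-location workers, $c$ per location, justified by Lemma \ref{lem:new_opt_matmat}(iii). The only cosmetic difference is that you phrase the optimality step as an explicit exchange argument over the set $S$, whereas the paper gets it implicitly from sorting $\bfz$; the resulting computation is identical.
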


\begin{theorem}
\label{thm:matmatq}
Alg. \ref{Alg:New_Optimal_Matmat} proposes a distributed matrix-matrix multiplication scheme which provides $Q$ such that $Q_{lb} \leq Q \leq Q_{ub}$. Here the bounds are given by
\begin{align*}
Q_{ub} &= \frac{n (\ell - 1 )}{2} + c \sum\limits_{i=0}^{c^x_1 - 1} (\ell - i) + c^x_2 (\ell - c^x_1) + 1 \, ; \;\textrm{and}\\
Q_{lb} &= \frac{n (\ell - 1 )}{2} + c \sum\limits_{i=0}^{c^0_1 - 1} (\ell - i) + c^0_2 (\ell - c^0_1) + \Bigl\lceil{ \frac{s_m y}{k_A}\Bigr\rceil}  + 1 \, ;
\end{align*}  where $c = \frac{n}{\ell}$, $c^x_1 = \Bigl\lfloor{\frac{k_A k_B + x - 1}{c} \Bigr\rfloor}$, $c^x_2 = k_A k_B + x - 1 - c c^x_1$ and $y = \Bigl\lfloor{ \frac{k_A x}{s_m}\Bigr\rfloor}$.
\end{theorem}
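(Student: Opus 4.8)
The plan is to reduce the global decodability question to a per-class rank condition and then to count the maximal number of computed products compatible with failure, exploiting the cyclic location structure of Algorithm \ref{Alg:New_Optimal_Matmat}. Recall that the $k_A k_B$ unknowns $\bfA_\alpha^T \bfB_\beta$ with $\bfA_\alpha \in \calC_m$ are governed by the generator matrix $\bfG = \bfG_A \odot \bfG_B$, and that, by Lemma \ref{lem:Gmmproof}, the columns of $\bfG$ indexed by \emph{any} $\tau = k_A k_B + x$ workers that have already computed their $\calC_m$ product have full rank $k_A k_B$. Hence $\bfA^T\bfB$ is recoverable as soon as every class $\calC_m$ has been computed in at least $\tau$ workers, and conversely any failure to decode forces some class to be computed in at most $\tau - 1$ workers. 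I would establish the two bounds separately against this pivot.

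For the upper bound I would invoke Claim \ref{claim:Qprev} with $\kappa - 1 = \tau - 1 = k_A k_B + x - 1$, for which $c_1 = c_1^x$ and $c_2 = c_2^x$. Since the quantity $\eta$ there is non-decreasing in the number of appearances $\kappa - 1$ (more finished workers can only add products), any configuration in which some fixed class is computed in at most $\tau - 1$ workers contains at most $\eta^x := Q_{ub} - 1$ products in total; here I use that all classes share the same location profile (Lemma \ref{lem:new_opt_matmat}(iii)), so this bound is uniform in $m$. Taking the contrapositive: once strictly more than $\eta^x$ products have been acquired, every class is present in at least $\tau$ workers, and Lemma \ref{lem:Gmmproof} then guarantees full recovery. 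This yields $Q \le Q_{ub} = \eta^x + 1$.

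For the lower bound I would exhibit a single failing configuration with $Q_{lb} - 1$ products. Fix a class $\calC_m$ and, among its members, pick the submatrix $\bfA_{\alpha_0}$ with the fewest coded appearances; by the balance property (Claim \ref{claim:diffV}) together with the identity $\sum_{\bfA_i \in \calC_m} |\bfV_i| = s_m(k_A - y)$, the number of coded submatrices of $\calC_m$ whose combination \emph{excludes} $\bfA_{\alpha_0}$ is at least $s_m - \floor{s_m(k_A-y)/k_A} = \ceil{s_m y / k_A}$. I would then force every computed $\calC_m$ product to carry a zero coefficient on $\bfA_{\alpha_0}$: this makes the $k_B$ rows of $\bfG$ indexed by $\bfA_{\alpha_0}^T\bfB_\beta$ identically zero on the computed columns, so $\bfG$ loses rank and the class cannot be decoded. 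Concretely, I would begin from the extremal $x=0$ pattern of Claim \ref{claim:Qprev} (the $k_A k_B - 1$ workers with the lowest $\calC_m$-locations finish, all others halt just before their $\calC_m$ location), chosen so that this finishing skeleton avoids $\bfA_{\alpha_0}$, which contributes the term $c\sum_{i=0}^{c_1^0-1}(\ell-i) + c_2^0(\ell - c_1^0)$ above the baseline $n(\ell-1)/2$; I would then allow each of the $\ceil{s_m y/k_A}$ coded workers that exclude $\bfA_{\alpha_0}$ and are not already finishing to advance by exactly one location so as to include its $\calC_m$ product. Each such advance adds one product without ever reintroducing $\bfA_{\alpha_0}$, producing the additive term $\ceil{s_m y/k_A}$ and hence a failing configuration with $Q_{lb} - 1$ products, so $Q \ge Q_{lb}$.

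The main obstacle is the bookkeeping in the lower bound: I must verify that the skeleton of $k_A k_B - 1$ lowest-location finishers can be chosen entirely among the $\bfA_{\alpha_0}$-free workers and disjointly from the $\ceil{s_m y/k_A}$ coded workers used for the single-location advances, so that the product count is \emph{exactly} $Q_{lb}-1$ and no added computation secretly restores the rank of $\bfG$. This is precisely where the interplay between the block-cyclic location assignment and the counter-based distribution of coded submatrices (lines 13--15 of Algorithm \ref{Alg:New_Optimal_Matmat}, underlying Claim \ref{claim:diffV}) must be used carefully. As a consistency check, when $x=0$ we have $y=0$, so $\ceil{s_m y/k_A}=0$ and $c_1^x = c_1^0$, $c_2^x = c_2^0$; then $Q_{lb} = Q_{ub}$ and the value of $Q$ is pinned down exactly, matching the statement.
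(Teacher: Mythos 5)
Your upper bound argument is sound and is essentially the paper's: non-decodability forces some class $\calC_m$ to appear at most $\tau-1$ times among the computed products, Claim \ref{claim:Qprev} with $\kappa - 1 = \tau - 1$ caps the total product count at $Q_{ub}-1$, and the full-rank property of $\bfG$ (Lemma \ref{lem:Gmmproof}, i.e.\ Theorem \ref{thm:matmatstr}) gives the contrapositive.

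The lower bound, however, has a genuine gap, and it is exactly the point you flagged as ``bookkeeping'': the skeleton you describe does not exist when $k_B \geq 2$. Your failure criterion demands that \emph{every} computed $\calC_m$ product exclude $\bfA_{\alpha_0}$. But $\bfA_{\alpha_0}$ appears uncoded in exactly $k_B$ worker nodes (Lemma \ref{lem:new_opt_matmat}(i)), and the workers holding an uncoded $\calC_m$ product are precisely the $k_A k_B$ workers where $\calC_m$ sits at the low locations $0,\dots,p-1$; only $(k_A - 1)k_B = k_A k_B - k_B$ of them avoid $\bfA_{\alpha_0}$. The coded appearances of $\calC_m$ excluding $\bfA_{\alpha_0}$ number at most $\ceil{\frac{s_m y}{k_A}}$, and you have already reserved those for the single-location advances (besides, they sit at locations $\geq p$, so substituting them into the skeleton would also lower the product count, since the increment of an appearance at location $i$ is $\ell - i$). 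Hence the total number of $\calC_m$ appearances compatible with your criterion is at most $(k_A k_B - k_B) + \ceil{\frac{s_m y}{k_A}}$, which is $k_B - 1$ short of the $(k_A k_B - 1) + \ceil{\frac{s_m y}{k_A}}$ appearances needed to reach $Q_{lb}-1$ products. Your construction therefore establishes only a strictly weaker lower bound whenever $k_B \geq 2$; it coincides with the required one only in the matrix-vector case $k_B = 1$.

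The paper's construction avoids this by using a weaker failure criterion: it allows the targeted submatrix ($\bfA_{\Delta_A - 1}$, which attains the minimum number of coded appearances) to appear \emph{up to $k_B - 1$ times}, all uncoded. Concretely, it finishes all $c$ workers with $\calC_{\ell-1}$ at locations $0,\dots,p-2$ and $c-1$ of the $c$ workers at location $p-1$, dropping exactly the worker $W_{n-k_B}$ whose location-$(p-1)$ entry is $\bfA_{\Delta_A-1}$, and then adds the $\ceil{\frac{s_m y}{k_A}}$ coded products that exclude $\bfA_{\Delta_A-1}$. In that pattern $\bfA_{\Delta_A-1}$ appears exactly $k_B - 1$ times, and non-decodability follows not from a block of identically zero rows in $\bfG$ but from counting: the $k_B$ unknowns $\bfA_{\Delta_A-1}^T \bfB_j$ are touched by only $k_B - 1$ equations. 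If you replace your ``zero coefficient everywhere'' criterion by this ``fewer equations than unknowns'' criterion, your skeleton-plus-advances construction and your product count go through essentially verbatim.
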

\begin{proof}
The proof of this theorem is given in Appendix \ref{app:proofQ}.
\end{proof}

When $x = 0$, then $\tau = k_A k_B$, $c_1^x = c_1^0$ and $c_2^x = c_2^0$, hence $Q_{lb} = Q_{ub} = Q$. Let us consider a special case, where $n$ and $k_A$ are co-prime. In that case, $\Delta_A = n \times k_A$, so $\ell = \Delta_A/k_A = n$ and $c = 1$. Moreover, $c_1^0 = k_A k_B - 1$ and $c_2^0 = 0$. Thus, we have 
\begin{align*}
    Q & = \frac{n (n - 1)}{2} + \sum\limits_{i=0}^{k_A k_B - 1} (n - i) + 1  \\
    & =  \frac{n (n - 1)}{2} + n k_A k_B - \frac{k_A k_B (k_A k_B - 1)}{2} + 1 \\
    & = n k_A k_B + \frac{(n - k_A k_B) (n + k_A k_B - 1)}{2} + 1 \\
    & \approx \Delta + \frac{s_m \times 2 k_A k_B}{2} + 1
\end{align*} when $s_m = n - k_A k_B$ is very small. Thus, we have $\frac{Q}{\Delta} \approx 1 + \frac{s_m}{n}$. If $s_m \ll n$, then $Q/\Delta$  is very close to $1$, which indicates that, in this special case, the proposed scheme can efficiently utilize the partial computations done by the slower workers.

It should be noted that the trivial lower bound of $Q$ is $\Delta$. This can be achieved directly by assigning multiple evaluations in many of the dense coded approaches \cite{yu2017polynomial, 8849468, 8919859}. But, the issue here is sparsity. 
The weights of the encoded $\bfA$ and $\bfB$ matrices in those approaches are $k_A$ and $k_B$, respectively, which can destroy the inherent sparsity of the matrices, so the worker computation time can go up significantly.

In our proposed approach, the value of $Q$ is slightly more than $\Delta$. However, we assign many uncoded $\bfA$ submatrices and we reduce the weight of $\bfB$ submatrices, which help to preserve the sparsity of $\bfA$ and $\bfB$ and can reduce the worker computation time. Thus we can gain in overall computation time as shown in Fig. \ref{over_comp_sparse} even though we lose a small amount in the $Q/\Delta$ metric. 

\begin{example}
We consider an example with $n = 8$ and $\gamma_A = \frac{1}{3}, \gamma_B = \frac{1}{2}$. So, we partition $\bfA$ into $\Delta_A = \textrm{LCM}(n, k_A) = 24$ submatrices and $\bfB$ into $\Delta_B = k_B = 2$ submatrices. The properties of this scheme are discussed in Table \ref{table:n8k6} for different values of $x$. For $x = 0$, the recovery threshold is $6$, and $Q_{lb} = Q_{ub} = Q = 59$. Moreover, for $x = 1$, the recovery threshold is $7$ and $Q_{lb} \leq Q \leq Q_{ub}$ where $Q_{ub} - Q_{lb} =  2$. 

\begin{table}[t]
\caption{{\small Comparison of properties of the system with $n =8$ and $\gamma_A = \frac{1}{3}$ and $\gamma_B = \frac{1}{2}$ for different values of $x$}}
\label{table:n8k6}
\begin{center}
\begin{small}
\begin{sc}
\begin{tabular}{c c c c c c}
\hline
\toprule
$\; \; x \; \; $ &  $\; \; y \; \;$ & $\; \; \tau \; \;$ & $\; \; Q_{lb}\; \;$ & $\; \;Q_{ub}\; \;$ & $\; \; Q \; \;$ \\
 \midrule
$\; \; 0 \; \; $ &  $\; \; 0 \; \;$ & $\; \; 6 \; \;$ & $\; \; 59\; \;$ & $\; \;59\; \;$ & $\; \; 59 \; \;$ \\
$\; \; 1 \; \; $ &  $\; \; 1 \; \;$ & $\; \; 7 \; \;$ & $\; \; 60\; \;$ & $\; \;62\; \;$ & $\; \; 61 \; \;$ \\
\bottomrule
\end{tabular}
\end{sc}
\end{small}
\end{center}
\end{table}%
\end{example}

\subsection{Dealing with sparse input matrices} 
\label{sec:sparsity}

We now discuss the performance of our scheme when the input matrices are sparse. In our algorithm, among the $\ell$ submatrices of $\bfA$, we assign $\ell_u = \frac{\Delta_A \Delta_B}{n}$ uncoded submatrices, where $\Delta_B = k_B$; the rest $\ell_c = \ell - \ell_u$ submatrices are coded, i.e., 
\begin{align*}
\frac{\ell_c}{\ell} = 1 - \frac{\ell_u}{\ell} = 1 - \frac{\Delta_A \Delta_B/n}{\Delta_A/k_A} = 1 - \frac{k_A k_B}{n} = \frac{s_m}{n}. 
\end{align*}The usual assumption is that $s_m \ll n$. This indicates that a small portion of the whole storage capacity for $\bfA$ is allocated for the coded submatrices. Thus, the worker nodes will take less time to compute their assigned block-products in our proposed approach. We clarify this with an example below.

Consider that $\bfA \in \mathbb{R}^{t \times r}$ and $\bfB \in \mathbb{R}^{t \times w}$ are two sparse random matrices, where the entries are chosen independently to be non-zero with probability $\eta$. Thus, when we obtain a coded submatrix as the linear combination of $k_A$ submatrices of $\bfA$, the probability of any entry to be non-zero is approximately $k_A \eta$ (here we assume $\eta$ is small). Similarly, the probability of any entry in a coded submatrix of $\bfB$ to be non-zero is approximately $k_B \eta$, if it is obtained by a linear combination of $k_B$ submatrices. Now for the dense coded approaches \cite{yu2017polynomial, 8849468, 8919859}, every worker node stores $1/k_A$ and $1/k_B$ fractions of matrices $\bfA$ and $\bfB$, and thus the computational complexity of every worker node is approximately $O \left( (\eta k_A \eta k_B \times t ) \times \frac{r}{k_A} \frac{w}{k_B} \right) = O \left( \eta^2 \times rwt \right)$.

In our proposed approach with $x = 0$, the fraction of uncoded $\bfA$ submatrices is $\frac{k_A k_B}{n}$ and the remaining $\frac{s_m}{n}$ fraction is coded and obtained from linear combination of $k_A$ submatrices. Moreover, the coded submatrix for $\bfB$ is obtained by a random linear combination of $\zeta$ uncoded submatrices. Thus, the computational complexity for a worker node to compute the block product between an uncoded $\bfA$ and coded $\bfB$ submatrix is $O \left( (\eta  \times \eta\zeta \times t) \frac{r}{\Delta_A} \frac{w}{k_B} \right) = O \left( \eta^2 \times rwt \times \frac{\zeta}{\Delta_A \Delta_B} \right)$. Similarly, the computational complexity for a worker node to compute the block product between a coded $\bfA$ and coded $\bfB$ submatrix is $O \left( (\eta k_A \times \eta \zeta \times t) \frac{r}{\Delta_A} \frac{w}{k_B} \right) = O \left( \eta^2 \times rwt \times \frac{\zeta k_A}{\Delta_A \Delta_B} \right)$. Since the workers need to compute $p$ uncoded-coded  and $\ell - p$ coded-coded block products, the total computational complexity for every worker node in our approach is approximately 
\begin{align*}
& p  \times  O \left( \frac{\eta^2 \times rwt \times \zeta}{\Delta_A \Delta_B} \right) + (\ell - p) \times  O \left( \frac{\eta^2 \times rwt \times \zeta k_A}{\Delta_A \Delta_B} \right) \\
=  & O\left( \eta^2 \times rwt \times \left( \frac{\zeta}{n} + \frac{\zeta s_m}{n k_B}\right)\right) .
\end{align*}Thus, the computational complexity of every worker node of our approach is around $O \left(\frac{\zeta}{n} \left( 1 + \frac{s_m}{k_B}\right)\right)$ times smaller than that of the dense coded approaches. So we claim that our proposed approach is much more suited to sparse input matrices than the dense coded approaches in \cite{yu2017polynomial, 8849468, 8919859}. For example, the worker computation speed in our proposed scheme is expected to be approximately $3\times$ faster than the dense coded approaches in case of the system in Fig. \ref{new_matmat}; where $n = 12, k_B = 3, \zeta = 2$ and $s_m = 3$. The worker computation speed can be further improved in our proposed approach if we consider $x > 0$, when we combine $k_A - y$ submatrices to obtain the coded submatrices of $\bfA$.

It should be noted that there are certain approaches \cite{8919859}, \cite{das2019random} where there are some ``systematic'' worker nodes which are responsible for computing only the uncoded block-products. The computational complexity of every such worker node is approximately $O \left( (\eta \times \eta  \times t ) \times \frac{r}{k_A} \frac{w}{k_B} \right) = O \left( \eta^2 \times \frac{rwt}{k_Ak_B} \right)$, which is certainly lesser than that of any worker node in our scheme. However, 
in the approaches of \cite{8919859} and \cite{das2019random}, there are $\tau$ systematic worker nodes all of which are assigned uncoded submatrices and $s_m$ parity worker nodes all of which are assigned dense encoded submatrices. It can often be the case that one of the systematic worker nodes is a full straggler (failure). In that case, the master node requires results from at least one parity worker node, where all the assigned encoded submatrices are dense. That parity worker node will require much more time to complete its job, hence the overall computation time will be higher. 

On the other hand, in our scheme we have assigned $\ell_u$ uncoded and $\ell_c$ coded $\bfA$ submatrices to each worker node, thus unlike [8] or [10], there is an underlying symmetry in our scheme so that every non-straggling worker node takes similar amount of computation time. Thus the overall computation will be faster, as demonstrated in Fig. \ref{over_comp_sparse}. 

\begin{remark}
Our scheme is also applicable for distributed matrix-vector multiplication. In that case, the usual assumption is that each worker can store the whole vector $\bfx$, and we can prove similar theorems by substituting $\gamma_B = 1$ (or $k_B = 1$).
\end{remark}

\section{Numerical Experiments and Comparisons}
\label{sec:numexp}

In this section, we compare the performance of our approach with competing methods \cite{yu2017polynomial}, \cite{8849468}, \cite{8919859}, \cite{das2019random}, \cite{das2020coded} via exhaustive numerical experiments in {\tt AWS} (Amazon Web Services) cluster where a {\tt t2.2xlarge} machine is used as the central node and {\tt t2.small} machines as the worker nodes. All the corresponding software codes related to the numerical experiments have been made publicly available \cite{anindyacode3} for ensuring reproducibility of the results.

{\bf Worker Computation Time:} We consider a distributed matrix multiplication system with $n = 24$ workers, each of which can store $\gamma_A = \frac{1}{4}$ fraction of matrix $\bfA$ and $\gamma_B = \frac{1}{5}$ fraction of matrix $\bfB$. The input matrices $\bfA$ and $\bfB$, of sizes $12000 \times 15000$ and $12000 \times 13500$, are assumed to be sparse. We assume three different cases where sparsity ($\mu$) of the input matrices are $95\%$, $98\%$ and $99\%$, respectively, which indicates that randomly chosen $95\%$, $98\%$ and $99\%$ entries of both of matrices $\bfA$ and $\bfB$ are zero. Table \ref{worker_comp} shows the corresponding comparison of the different methods for the worker computation time for this example. We note that in real world problems, it is common that the corresponding data matrices exhibit this level of sparsity (examples can be found in \cite{sparsematrices}). 

It should be noted that we have mentioned the average value of worker computation time over all the workers in case of \cite{yu2017polynomial}, \cite{8849468}, \cite{das2020coded} and our proposed approach. However, in case of \cite{8919859} and \cite{das2019random}, we have shown the average worker computation time over the parity workers only because the remaining worker nodes are message workers where there is no coding involved.

\begin{table}[t]
\caption{{\small Comparison of worker computation time (in seconds) for matrix-matrix multiplication for $n = 24, \gamma_A = \frac{1}{4}$ and $\gamma_B = \frac{1}{5}$ (*for \cite{das2019random}, we assume $\gamma_A = \frac{2}{5}$ and $\gamma_B = \frac{1}{4}$) when randomly chosen $95\%$, $98\%$ and $99\%$ entries of both of matrices $\bfA$ and $\bfB$ are zero.}}
\label{worker_comp}
\begin{center}
\begin{small}
\begin{sc}
\begin{tabular}{c c c c c}
\hline
\toprule
\multirow{2}{*}{Methods} & \multirow{2}{*}{\, s \,} & \multicolumn{3}{c}{Worker Comp. Time (s)}  \\ \cline{3-5}
&  & $\mu = 99\%$ &  $\mu= 98\%$ & $\mu = 95\%$    \\
 \midrule
Poly Code  \cite{yu2017polynomial} & $4$ & $1.23$ &  $3.10$ & $8.21$ \\
Ortho Poly \cite{8849468}    & $4$ & $1.25$ & $3.13$ & $8.14$ \\
RKRP Code \cite{8919859}   & $4$ & $1.21$ & $3.09$ & $8.10$ \\
Conv. Code* \cite{das2019random} & $4$ & $1.92$ & $5.07$  & $10.72$ \\
SCS Opt. Sch. \cite{das2020coded} & $4$ & $0.91$ &  $1.89$ & $4.67$  \\
{\textbf{Prop. Sch.}} ($x = 0$) & $4$ & $0.54$ & $0.97$  & $3.68$ \\
{\textbf{Prop. Sch.}} ($x = 2$) & $2$ & $0.45$ & $0.81$  & $3.21$ \\
\bottomrule
\end{tabular}
\end{sc}
\end{small}
\end{center}
\end{table}%

It can easily be verified from the table that the workers take significantly less time to compute the submatrix products for our proposed approach than the other methods \cite{yu2017polynomial, 8849468, 8919859, das2019random}. This is because in the other methods the coded submatrices are linear combinations of all $k_A = 4$ submatrices from $\bfA$ (or $k_B = 5$ submatrices from $\bfB$). 

The work most closely related to our approach is our prior work in \cite{das2020coded} (SCS optimal scheme, see Section V in \cite{das2020coded}). Both approaches partition $\bfA$ and $\bfB$ into $\Delta_A = \textrm{LCM}(n,k_A)$ and $\Delta_B = k_B$ submatrices, respectively. Moreover, both approaches assign some uncoded submatrices of $\bfA$ and then some coded submatrices of $\bfA$; and assign a coded submatrix of $\bfB$ to each of the worker nodes.

However, there are some crucial differences. \cite{das2020coded} requires the weight of the encoding of the $\bfA$ submatrices to be $\Delta_A - \ell_u$ which is much higher than $k_A - y$. Furthermore \cite{das2020coded} does not allow for a trade-off between the number of stragglers and the weight of the coded $\bfA$ submatrices; this is a salient feature of our approach. Moreover, for the coding of $\bfB$, SCS optimal scheme in \cite{das2020coded} assigns linear combinations of $k_B$ submatrices, whereas in our proposed approach we assign linear combinations of $\zeta$ submatrices where $\zeta$ can be significantly smaller than $k_B$. We emphasize that the our proposed approach continues to enjoy the optimal straggler resilience when $x=0$. However, we point out that we lose a small amount in the $Q/\Delta$ metric, with respect to SCS optimal scheme in \cite{das2020coded}.

\begin{figure}[t]
\centering
\resizebox{0.99\linewidth}{!}{
\begin{tikzpicture}
\begin{axis}[
width=7.0in,
height=5.03in,
at={(2.6in,0.852in)},
major x tick style = transparent,
ybar=2*\pgflinewidth,
bar width=25pt,
ymajorgrids,
xmajorgrids,
xlabel style={font=\color{white!15!black}, font = \LARGE},
ylabel style={font=\color{white!15!black}, font = \LARGE},
ylabel={Worker Computation Time (in sec)},
symbolic x coords={{\LARGE uncoded $\bfA$ - coded $\bfB$},{\LARGE coded $\bfA$ - coded $\bfB$}},
xtick = data,
scaled y ticks = false,
enlarge x limits= 0.4,
ymin=0,
ymax=1.5,
legend cell align=left,
legend image code/.code={
        \draw [#1] (-0.05cm,-0.25cm) rectangle (0.5cm,0.5cm); },
legend style={at={(0.2,0.70)}, nodes={scale=2}, anchor=south west, legend cell align=left, align=left, draw=white!15!black}
    ]
    \addplot[style={fill=mycolor2,mark=none},postaction = { pattern = vertical lines }]
            coordinates {({\LARGE uncoded $\bfA$ - coded $\bfB$}, 0.92) ({\LARGE coded $\bfA$ - coded $\bfB$}, 0.97) };
\addlegendentry{SCS optimal scheme \cite{das2020coded}}

    \addplot[style={fill=mycolor1,mark=none},postaction = { pattern = grid }]
            coordinates {({\LARGE uncoded $\bfA$ - coded $\bfB$}, 0.58) ({\LARGE coded $\bfA$ - coded $\bfB$},0.39) };
\addlegendentry{Proposed Scheme $x = 0$}

\addplot[style={fill=mycolor3,mark=none}, postaction = { pattern = horizontal lines }]
            coordinates {({\LARGE uncoded $\bfA$ - coded $\bfB$}, 0.57) ({\LARGE coded $\bfA$ - coded $\bfB$},0.24) };
\addlegendentry{Proposed Scheme $x = 2$}

    \end{axis}

\end{tikzpicture}%
}

\caption{\small The comparison of worker computation time for the case of sparse matrices with $\mu = 98\%$. We split the total time into two parts: time required for multiplying $p$ uncoded submatrices of $\bfA$ with the coded submatrix of $\bfB$ and the time required for multiplying $(\ell - p)$ coded submatrices from $\bfA$ and with the coded submatrix of $\bfB$.}
\label{fig:timuncodedcoded}
\end{figure}

Table \ref{table:coding} shows the weight of the coding matrices for all these approaches. In this example, to obtain the coded submatrix of $\bfA$ and $\bfB$ in the SCS optimal scheme in \cite{das2020coded}, we need to have random linear combination of $19$ and $5$ submatrices, respectively; whereas our proposed method assigns linear combinations of $4$ and $3$ submatrices while having the optimal straggler resilience. Moreover, in our proposed approach, if we consider having a larger recovery threshold with $x = 2$, then the weight of coding for $\bfA$ is even lesser. Thus, the worker computation for our proposed approach is faster than the case in \cite{das2020coded} as shown in Table \ref{worker_comp}. 

This is further clarified using Fig. \ref{fig:timuncodedcoded} which shows a detailed comparison of worker computation for computing uncoded-coded and coded-coded matrix products. The proposed approach is faster than \cite{das2020coded} when computing the uncoded $\bfA$ - coded $\bfB$ products because of the reduced weight of encoded $\bfB$ submatrices. Moreover, our approach is also faster in computing the coded $\bfA$ - coded $\bfB$ products because of the reduced weight of encoded $\bfA$ submatrices. Because of these two differences in coding, worker computation is faster in our proposed approach in comparison to \cite{das2020coded}.

\begin{table}[t]
\caption{\small Weight of coding for the coded submatrices $\bfA$ and $\bfB$ for different approaches where $\Delta_A =$LCM$(n, k_A)$ and $\omega = 1 + \Bigl\lceil{\frac{s_m}{k_B}}\Bigr\rceil$. The uncoded submatrices for these approaches are assigned in the same way.}
\label{table:coding}
\begin{center}
\begin{small}
\begin{sc}
\begin{tabular}{c c c}
\hline
\toprule
Methods &  Weight of $\bfA$ & Weight of $\bfB$  \\
 \midrule
 SCS Opt. \cite{das2020coded}   &  $\Delta_A - \frac{\Delta_A k_B}{n} = 19$  & $k_B = 5$\\
 {\textbf{Prop.  ($x = 0$)}}  &  $k_A = 4$  & $1 + k_B - \Bigl\lceil{ \frac{k_B}{\omega} \Bigr\rceil} = 3$\\
 {\textbf{Prop. ($x = 2$)}}  & $k_A - \Bigl\lfloor{\frac{k_A x}{s_m}}\Bigr\rfloor = 2$ & $1 + k_B - \Bigl\lceil{ \frac{k_B}{\omega} \Bigr\rceil} = 3$\\
\bottomrule
\end{tabular}
\end{sc}
\end{small}
\end{center}
\end{table}%

{\bf Overall Computation Time:} Now we compare different approaches in terms of overall computation time to recover $\bfA^T \bfB$. Overall computation time is the time required by the worker nodes to compute the products so that the master node is able to decode all the unknowns corresponding to $\bfA^T \bfB$. Note that it is different than the worker computation time discussed above. For example, we have $24$ worker nodes, and assume that in polynomial code approach \cite{yu2017polynomial} (with recovery threshold $20$), these worker nodes require $t_0 \leq  t_1 \leq t_2 \leq \dots \leq t_{23}$ to compute their respective block-products. Then the overall computation time for this approach is $t_{19}$.

First, we consider the same system of $n = 24$ worker nodes and sparse matrices with $98\%$ or $95\%$ sparsity. We assume that there can be partial stragglers (slower workers) among these $n$ worker nodes where the slower workers have one-fifth of the speed of the non-straggling nodes. We carry out the simulations for different approaches for different number of partial stragglers. 

The results are demonstrated in Fig. \ref{over_comp_sparse} where we can see that our proposed approach is significantly faster in terms of overall computation time in comparison to the dense coded approaches for different number of slower workers. This is because our proposed approach utilizes the partial stragglers and deals with sparsity quite well. It should be noted that there are peaks for \cite{yu2017polynomial, 8849468} when there are {\it five} slower workers. The reason is that these codes are designed for {\it four} full stragglers while they do not take account the partial computations of the slower workers. 

\begin{figure}[t]
\centering
\resizebox{0.99\linewidth}{!}{

\definecolor{mycolor6}{rgb}{0.92941,0.69412,0.12549}%
\definecolor{mycolor7}{rgb}{0.74902,0.00000,0.74902}%
\definecolor{mycolor8}{rgb}{0.60000,0.20000,0.00000}%

\begin{tikzpicture}
\begin{axis}[%
width=5.1in,
height=3.003in,
at={(2.6in,0.85in)},
scale only axis,
xmin=0,
xmax=6,
xlabel style={font=\color{white!15!black}, font=\Large},
xlabel={Number of slower workers},
ymin=6,
ymax=20,
ytick={6, 10, 14, 18},
xtick={0,1,2,3,4,5,6},
tick label style={font=\LARGE} ,
ylabel style={font=\color{white!15!black}, font=\Large},
ylabel={Overall computation time (in $s$)},
axis background/.style={fill=white},
xmajorgrids={true},
ymajorgrids={true},
legend style={at={(0.03,0.5)}, nodes={scale=1}, anchor=south west, legend cell align=left, align=left, draw=white!15!black,font = \LARGE}
]

\addplot [solid, color=blue, line width=2.0pt, mark=diamond, mark options={solid, blue, scale = 3}]
  table[row sep=crcr]{%
0	8.41\\
1	8.35\\
2	8.37\\
3	9.66\\
4	9.78\\
5	15.78\\
6	15.64\\
};
\addlegendentry{Polynomial code  \cite{yu2017polynomial}}

\addplot [dashed, color=mycolor6, line width=2.0pt, mark=*, mark options={solid, mycolor6, scale = 2}]
  table[row sep=crcr]{%
0	8.43\\
1	8.47\\
2	8.70\\
3	9.97\\
4	9.94\\
5	15.62\\
6	16.07\\
};
\addlegendentry{Ortho-Poly Code \cite{8849468}}

\addplot [dashed, color=mycolor1, line width=2.0pt, mark=*, mark options={solid, mycolor1, scale = 1}]
  table[row sep=crcr]{%
0	8.13\\
1	8.52\\
2	8.96\\
3	9.91\\
4	9.94\\
5	15.69\\
6	15.82\\
};
\addlegendentry{RKRP Code\cite{8919859}}

\addplot [dotted, color=red, line width=2.0pt, mark=o, mark options={solid, red, scale = 3}]
  table[row sep=crcr]{%
0	7.71\\
1	7.76\\
2	7.69\\
3	7.79\\
4	7.79\\
5	7.91\\
6	8.00\\
};
\addlegendentry{SCS optimal Scheme \cite{das2020coded}}

\addplot [dotted, color=mycolor7, line width=2.0pt, mark=*, mark options={solid, mycolor7, scale = 3}]
  table[row sep=crcr]{%
0	7.67\\
1	8.88\\
2   8.85\\
3	8.93\\
4	8.97\\
5	8.99\\
6	12.08\\
};
\addlegendentry{Proposed Scheme}

\end{axis}
\end{tikzpicture}%
}

\caption{\small Comparison among different coded approaches in terms of overall computation time for different number of slower worker nodes when the ``input'' matrices are fully dense. The system has $n = 24$ worker nodes each of which can store $\gamma_A = 1/4$ and $\gamma_B = 1/5$ fraction of matrices $\bfA$ and $\bfB$, respectively, so the recovery threshold, $\tau = 20$. The slower workers are simulated in such a way so that they have half of the speed of the non-straggling workers.}
\label{over_comp_dense}
\end{figure}
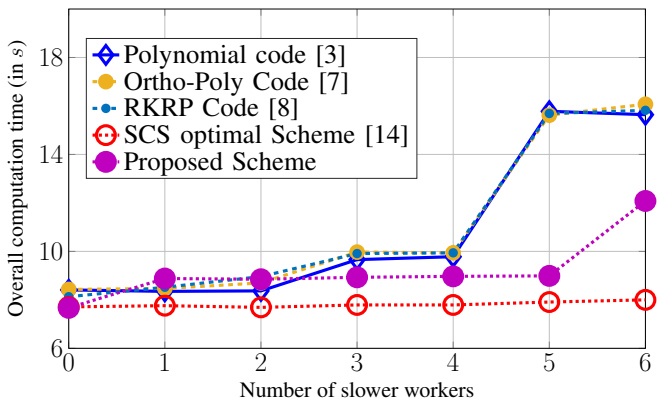

Next we simulate another example with dense ``input'' matrices in the same system having $n = 24$ worker nodes which include some slower worker nodes which have half of the speed of the non-straggling nodes. The result for overall computation time is demonstrated in Fig. \ref{over_comp_dense}. Although, the approach in \cite{das2020coded} is slightly faster than our proposed approach (since it has a slightly smaller $Q/\Delta$ than ours), our proposed approach still outperforms the dense coded approaches in \cite{yu2017polynomial, 8849468, 8919859} in most of the cases.

Now in order to clarify the trade-off between $Q/\Delta$ and the sparsity of the code, we emphasize that there are two fold gains in computation time in our scheme: utilization of partial stragglers and dealing with sparse matrices. Because of these two fold gains, when the ``input'' matrices are sparse, our approach outperforms all the approaches in \cite{yu2017polynomial, 8849468, 8919859} and \cite{das2020coded} as shown in Fig. \ref{over_comp_sparse}. On the other hand, if the ``input'' matrices are dense, we have the advantage in computation time because of  utilization of partial stragglers only. In that case, since the approach in \cite{das2020coded} has a slightly smaller $Q/\Delta$ than ours, its overall computation is slightly faster, (see Fig. \ref{over_comp_dense}). 

{\bf Value of $Q/\Delta$:} Many of the available approaches in coded matrix computations literature \cite{yu2017polynomial, 8849468, 8919859, das2019random} cannot leverage the slow workers, because they assign exactly one job to each of the worker nodes. On the other hand, the proposed approach and \cite{das2020coded} assign multiple jobs to each of the worker nodes. This allows the opportunity to leverage partial stragglers. 

We consider the same example of $n = 24$ worker nodes and show the comparison in Table \ref{table:qkappa} in terms of $Q/\Delta$. We can see that our approach has a slightly higher $Q/\Delta$ than the approach in \cite{das2020coded} and the value of $Q/\Delta$ can increase for the choice of $x > 0$. However our proposed approach has the same straggler resilience as \cite{das2020coded} and has a significant gain over that approach in terms of worker computation speed as shown in Table \ref{worker_comp}. 

It should be noted that the approaches in \cite{yu2017polynomial, 8849468, 8919859} can directly be extended to utilizing the partial computations of the stragglers, and the corresponding $Q/\Delta$ value can be improved to $1$. However, in that case, the size of corresponding system matrix increases significantly and that can lead to a very high worst case condition number. It indicates that the systems can be numerically unstable and the recovered results can be quite inaccurate. This is also discussed in Table IX in \cite{das2020coded} with numerical examples. 

\begin{table}[t]
\caption{\small Comparison of utilization of partial stragglers and numerical stability among different approaches}
\label{table:qkappa}
\begin{center}
\begin{small}
\begin{sc}
\begin{tabular}{c c c}
\hline
\toprule
Methods &  $\;Q/\Delta$ & $\kappa_{worst}$   \\
 \midrule
\; \; Poly Code  \cite{yu2017polynomial} \; \; & \; \; $N/A$\; \;  & $2.40 \times 10^{10}$\\
\; \; Ortho-Poly \cite{8849468} \; \;    & \; \; $N/A$\; \;  & $1.96 \times 10^6$\\
\; \; RKRP Code\cite{8919859}\; \;    & \; \; $N/A$\; \;  & $2.83 \times 10^5$\\
\; \; Conv Code* \cite{das2019random}\; \;   & \; \; $N/A$\; \;  & $2.65 \times 10^4$\\
\; \; SCS Opt. Sch. \cite{das2020coded}\; \;   & \; \; $124/120$\; \;  & $4.93 \times 10^6$\\
 {\textbf{Prop. Sch. ($x = 0$)}}  & $139/120$  & $2.37 \times 10^6$\\
 {\textbf{Prop. Sch. ($x = 2$)}}  &  $\frac{141}{120} \leq \frac{Q}{\Delta} \leq \frac{142}{120}$\; \;  & $2.25 \times 10^4$\\
\bottomrule
\end{tabular}
\end{sc}
\end{small}
\end{center}
\end{table}%

 {\bf Numerical Stability:} For the same system we find the worst case condition number ($\kappa_{worst}$) of the decoding matrices over all different choices of $s$ stragglers for all the approaches and present them in Table \ref{table:qkappa}. As expected, the polynomial code approach \cite{yu2017polynomial} has a very high $\kappa_{worst}$. The works in \cite{8849468, 8919859, das2019random} have significantly smaller $\kappa_{worst}$; however they cannot leverage the partial computations done by the slower worker nodes. The method in \cite{das2020coded} can utilize the partial stragglers and also provide $\kappa_{worst}$ in the similar range in comparison to \cite{8849395}.

Now from Table \ref{table:qkappa}, we can see that our approach for $x = 0$ also provides comparable values for worst case condition numbers, which indicates that the corresponding decoding matrix will not be ill-conditioned. The value of $\kappa_{worst}$ is further reduced when we consider the case of $x = 2$. Although for $x = 2$, we have resilience to less number of stragglers, it has significant advantage on worker computation time as we have shown in Table \ref{worker_comp}.

\section{Conclusion}
\label{sec:conclusion}

In this work, we have proposed a distributed coded matrix-multiplication scheme which (i) is resilient to optimal number of stragglers, (ii) leverages the partial computations done by the slower worker nodes and (iii) allows limited amount of coding so that the scheme is suited specifically to the case when the ``input'' matrices are sparse. 

In our scheme, most of the assigned $\bfA$ submatrices are uncoded which can preserve the sparsity of the input matrix $\bfA$. Thus, the worker computation speed is significantly faster in our method in comparison to some other prior works. Moreover, our proposed scheme also allows a trade off between the straggler resilience and the worker computation speed. Comprehensive numerical experiments on Amazon Web Services (AWS) cluster support our findings. 

There are several directions for the future work of this paper. It may be interesting to examine if multiple $\bfB$ submatrices can be assigned to a worker node to reduce the overall worker computation time while maintaining all the desirable properties of the scheme. Furthermore, we have the defined $Q$ as the worst case number of symbols to recover the final result; analysis on the average number of symbols can be of interest.

\appendix

\subsection{Proof of Claim \ref{claim:diffV}}
\label{App:proofclaim1}

\begin{proof}
We know $|\calC_m| = k_A$, and we use the random linear combinations of $k_A - y$ uncoded submatrices from matrix $\bfA$ to obtain the coded symbols. Now consider the indices of the submatrices of class $\calC_m$ using the row vector
\begin{align}
\label{eq:indices}
    \tilde{\bfz} = \begin{bmatrix} m \, & \, \ell+m \,& \, 2\ell+m  \,& \, \dots \,& \, (k_A-1)\ell+m  \end{bmatrix};
\end{align} and consider the index parameter for $\calC_m$ as $\lambda_m$. According to lines 12 and 13 of Alg. \ref{Alg:New_Optimal_Matmat}, we assign any coded submatrix of $\calC_m$ using the random linear combinations of $k_A - y$ submatrices from $\calC_m$. The indices of these submatrices are the consecutive $k_A - y$ entries of $\tilde{\bfz}$ in \eqref{eq:indices}, starting from $\lambda_m$. Next we shift the index parameter right by $k_A - y$ in a cyclic fashion. In this way, after assigning all the coded symbols corresponding to $\calC_m$, let us assume that the index parameter is $\lambda_m^{end}$. Now without loss of generality, assume that $i < j$, and we have the following three cases.

\noindent {\bf Case 1:} If $\lambda_m^{end} \leq i < j$, $|\bfV_i|= |\bfV_j|$.

\noindent {\bf Case 2:} If $i < \lambda_m^{end} \leq j$, $|\bfV_i| - |\bfV_j| = 1$.

\noindent {\bf Case 3:} If $i < j < \lambda_m^{end}$, $|\bfV_i|= |\bfV_j|$.

Thus in all three cases, we have $\Bigl||\bfV_i| - |\bfV_j|\Bigr| \leq 1$.
\end{proof}

\subsection{Proof of Lemma \ref{lem:new_opt_matmat}}
\label{App:prooflemma1}
\begin{proof}

Let $\bar{w} = \frac{\Delta_A}{n}$, so we can write $i = \delta \bar{w} + \alpha$ for any $\bfA_i$ where $0 \leq \alpha \leq \bar{w} - 1$ and $0 \leq \delta \leq n - 1$. Let $\tilde{f}(z)$ for $z \in \mathbb{Z}$ be the function defined below.
\begin{align*}
    \tilde{f}(z) = \begin{cases}
    z & \text{~if $z \geq 0$,}\\
    n + z & \text{~otherwise.}
    \end{cases}
\end{align*} where $-n < z < n$. With $p = \frac{\Delta}{n} = k_B \bar{w}$ and from the definition of $T$ in Alg. \ref{Alg:New_Optimal_Matmat}, the uncoded submatrices assigned to the worker node $W_j$ are given by $\bfA_{j\bar{w}}, \bfA_{j\bar{w}+1}, \bfA_{j\bar{w}+2}, \dots, \bfA_{j\bar{w}+p-1}$ (indices reduced modulo $\Delta_A$). 
This implies that $\bfA_i$ is assigned to worker nodes $W_{\tilde{f}(\delta)}, W_{\tilde{f}(\delta-1)}, \dots, W_{\tilde{f}(\delta - k_B + 1)}$ as an uncoded block. The function $\tilde{f}(\cdot)$ handles negative indices. So, we have $|\bfU_i| = k_B$, which proves part (i) of the lemma.

Next, we observe that the uncoded assignment to  $W_j$, namely $\bfA_{j\bar{w}}, \bfA_{j\bar{w}+1}, \dots, \bfA_{j\bar{w}+p-1}$, belong to the classes $\calC_{j\bar{w}}, \calC_{j\bar{w}+1}, \dots, \calC_{j\bar{w}+p-1}$ and $\ell_c = \ell - p$ coded submatrices consist of the elements from $\calC_{j\bar{w} + p}, \calC_{j\bar{w}+p+1}, \dots, \calC_{j\bar{w}+\ell-1}$, respectively. In particular, it follows that the assigned submatrices to any worker node $W_j$ belong to the classes $\calC_{j\bar{w}}, \calC_{j\bar{w}+1}, \dots, \calC_{j\bar{w}+\ell-1}$,  so that each worker node contains an assignment from each class $\calC_m, 0 \leq m \leq \ell-1$. Thus, any submatrix cannot appear more than once within any particular worker whether in an uncoded or a coded block. This shows that $\bfU_i \cap \bfV_i = \emptyset$.

Now, consider the $j$-th symbols (i.e., the $j$-th $\bfA$ submatrices) of the consecutive $\ell$ worker nodes  within a worker group $\calG_{\lambda}$, for $\lambda \in\{ 0, 1, \dots, c-1\}$. We can show that all the $\ell$ classes are represented over those $j$-th symbols of these corresponding $\ell$ worker nodes. To prove it by contradiction, we assume that there are two different workers, $W_{i_1}$ and $W_{i_2}$, where the corresponding symbols come from the same class. Since the assignments to $W_i$ are $\calC_{i\bar{w}}, \calC_{i\bar{w}+1}, \calC_{i\bar{w}+2}, \dots, \calC_{i\bar{w}+p-1}$, this is only possible if $\left[ (i_1 - i_2) \bar{w} \right] \,\, \textrm{mod} \, \ell = 0$. But it is not possible since $|(i_1 - i_2)| < \ell$ and $\bar{w}$ and $\ell$ are co-prime (since $\Delta_A = \textrm{LCM} (n, k_A)$).  Thus the $j$-th symbols of $\ell$ workers of $\calG_{\lambda}$ come from $\ell$ different $\calC_m$'s. Together with the fact that each worker node contains all $\calC_m$'s, we can conclude part (iii).

Applying the above argument to the $j$-th coded symbols of $\ell$ consecutive workers within a group, we conclude that these come from $\ell$ different $\calC_m$'s. But we have $c$ such worker groups each consisting of $\ell$ consecutive workers. Moreover, every worker has $\ell_c$ coded symbols, thus there are, in total, $\ell_c \times c$ coded symbols from any class $\calC_m$. Since each of the coded symbols consists of $k_A - y$ submatrices from $\calC_m$, we have
\begin{align*}
    \sum\limits_{i: \bfA_i \in \calC_m} |\bfV_i| & = \ell_c \times \frac{n}{\ell} \times (k_A - y) \\ & = \left( \frac{\Delta_A}{k_A} - \frac{\Delta_A \Delta_B}{n}\right) \times \frac{n}{\Delta_A/k_A} \times (k_A - y) \\ & = s_m \left( k_A - y \right).
\end{align*} Thus, the average of those corresponding $|\bfV_i|$'s is given by
\begin{align}
\label{eq:avg_mu}
    \mu_m = \frac{\sum\limits_{i: \bfA_i \in \calC_m} |\bfV_i|}{k_A} = \frac{ s_m \left( k_A - y \right)}{k_A}.
\end{align} But according to Claim \ref{claim:diffV}, $\Bigl||\bfV_i| - |\bfV_j|\Bigr| \leq 1 ; $ for any $i, j$ such that $\bfA_i, \bfA_j \in \calC_m$. Thus  $\floor{\mu_m} \leq |\bfV_i| \leq \ceil{\mu_m}$, so that
\begin{align*}
    |\bfV_i| \geq \; \Bigl\lfloor{s_m - \frac{s_m y}{k_A}}\Bigr\rfloor  & = \Bigl\lfloor{s_m - \frac{s_m}{k_A} \Bigl\lfloor{\frac{k_A x}{s_m}\Bigr\rfloor}}\Bigr\rfloor \\
    & \geq \Bigl\lfloor{s_m - \frac{s_m}{k_A} {\frac{k_A x}{s_m}}}\Bigr\rfloor  = \; s_m - x \; = \; s \, .
\end{align*} This concludes the proof for part (ii).
\end{proof}

\subsection{Proof of Corollary \ref{cor:matmat}}
\label{App:proofcorr2}

\begin{proof}
Consider two elements, $\bfA_u, \bfA_v \in \calC_m$ where $u \neq v$. From Lemma \ref{lem:new_opt_matmat}, we know that $|\bfU_u| = |\bfU_v| = k_B$. Now from the proof of Lemma \ref{lem:new_opt_matmat}, we also know that each of the $\ell$ submatrices assigned to any worker node comes from a different equivalence class (indices modulo $\ell$), i.e., $\bfU_u \cap \bfU_v = \emptyset$. It can be proved similarly for any two arbitrary elements of $\calC_m$. Now, since $|\calC_m| = k_A$, $\abs*{ \left( \underset{i: \bfA_i \in \calC_m }{\cup} \bfU_i\right) } = k_A k_B$. 

Now consider any $i$ such that $\bfA_i \in \calC_m$. According to the proof of Lemma \ref{lem:new_opt_matmat}, each worker node contains exactly one assignment of $\calC_m$, i.e., $\bfU_i \cap \bfV_{j \ell+m} = \emptyset$; if $\bfA_{j \ell+m} \in \calC_m$.
Thus we have $\left( \underset{i: \bfA_i \in \calC_m }{\cup} \bfU_i\right)  \; \cap \;  { \left( \underset{i: \bfA_i \in \calC_m }{\cup} \bfV_i\right) } \; = \; \emptyset$.

But $\calC_m$ appears at every worker node, so we can say that $\abs*{\left( \underset{i: \bfA_i \in \calC_m }{\cup} \bfU_i\right)  \; \cup \;  { \left( \underset{i: \bfA_i \in \calC_m }{\cup} \bfV_i\right) }} = n$; and using the properties mentioned above, we have $\abs*{ \left( \underset{i: \bfA_i \in \calC_m }{\cup} \bfV_i\right) } = n - k_A k_B = s_m $.
\end{proof}

\subsection{Proof of Claim \ref{claim:appearB}}
\label{App:proofclaim2}
\begin{proof}
According to Definition \ref{def:typeB}, the assigned submatrix of $\bfB$ for worker node $W_k$ (minimum node index where $\bfA_i$ appears) is of type $k$. Now suppose that $\bfA_i$ continues to appear at worker nodes $W_{k+1}, W_{k+2}, \dots, W_{k+d}$ where $d \geq 0$ (either coded or uncoded). Thus, the corresponding assigned submatrices of $\bfB$ are from types $k+1, k+2, \dots, k+d$ (reduced modulo $k_B$). 

Assume $\bfA_i \in \calC_m$. Since $\bfA_i$ does not appear at $W_{k+d+1}$, this implies that the assignment corresponding to $\calC_m$ in $W_{k+d+1}$ is uncoded. This is because we have $x = 0$, so that $y =0$, i.e. all $\bfA_i \in \calC_m$ participate in a coded block. Thus, suppose $\bfA_j \in \calC_m$ appears in $W_{k+d+1}$ as an uncoded block. We point out that according to the proof of Lemma \ref{lem:new_opt_matmat}, $\bfA_j$ appears at consecutive $k_B$ worker nodes in an uncoded fashion.

This means that the next worker node that where $\bfA_i$ can potentially appear is $W_{k+d+k_B+1}$, which has a submatrix of type $k+d+k_B+1$ (mod $k_B$), which is the same as type $k+d+1$ (mod $k_B$). On the other hand, if $\bfA_i$ does not appear at $W_{k+d+k_B+1}$, another member of $\calC_m$ will appear in the next $k_B$ worker nodes. Then we would need to move to $W_{k+d+2 k_B+1}$ for the next potential appearance of $\bfA_i$, where we have the submatrix from type $k+d+ 2 k_B+1$ (mod $k_B$), which is the same as type $k+d+1$ (mod $k_B$). 

The above argument shows that after having the $\bfB$ submatrices of types $k, k+1, k+2, \dots, k+d$ (mod $k_B$) from $W_k, W_{k+1}, W_{k+2}, \dots, W_{k+d}$, we will have a submatrix of type $k+d+1$ (mod $k_B$). Applying the argument recursively, we can conclude the required result.
\end{proof}

\subsection{Proof of Lemma \ref{lem:graphtheory}}
\label{App:prooflemma2}
\begin{proof}
Let $\bfb$ represent a vector with $k_B$ unknowns so that 
\begin{align*}
    \bfb^T = \begin{bmatrix}
    b_0 \; & \; b_1 \; & \; b_2 \; & \; \dots \; & \; b_{k_B - 1}
    \end{bmatrix} .
\end{align*}Let $\bfc^T = \bfb^T \, \bfR_i$, where $\bfc$ is of length $\sigma$. In order to prove the lemma, we need to show that we can decode all $k_B$ unknowns of $\bfb$ from any $k_B$ entries of $\bfc$.

Consider a bipartite graph $\calG = \calC \cup \calB$ whose vertex set consists of two sets, $\calC$ (representing any $k_B$ entries of $\bfc$) and $\calB$ (representing the entries of $\bfb$). An edge connects $c_i$ to $b_j$ if $b_j$ participates in the corresponding equation. Thus, a columns of $\bfR_i$ corresponds to a vertex in $\calC$ and the non-zero entries of the column correspond to the edges incident on the vertex. 

In the argument below, we suppose that the random linear coefficients are indeterminates and we argue that there exists a matching in $\calG$ where all the unknowns in $\calB$ are matched. Thus, according to Hall's marriage theorem \cite{marshall1986combinatorial}, we need to argue that for any $\tilde{\calC} \subset \calC$, the cardinality of the neighbourhood of $\tilde{\calC}$, denoted as $\calN (\tilde{\calC}) \subset \calB$, is at least as large as $|\tilde{\calC}|$.
%

To argue this, we partition the columns of $\bfR_i$ into $\omega$ disjoint sets, $\mathbf{\Omega}_0, \mathbf{\Omega}_1, \dots, \mathbf{\Omega}_{\omega - 1}$, where each of the sets, $\mathbf{\Omega}_0, \mathbf{\Omega}_1, \dots, \mathbf{\Omega}_{\omega - 2}$, have $k_B$ columns each, and $\mathbf{\Omega}_{\omega-1}$ has the remaining $\sigma - k_B (\omega - 1) \leq k_B$ columns.
According to Claim \ref{claim:appearB}, for any $\bfA_i$, since the corresponding $\bfB$ submatrices come from $\sigma$ consecutive types, we can partition it in such a way so that each set $\mathbf{\Omega}_0, \mathbf{\Omega}_1, \dots, \mathbf{\Omega}_{\omega - 2}$ has exactly one column corresponding to every submatrix type, and $\mathbf{\Omega}_{\omega-1}$ has the remaining ones. So, by permuting some columns of $\bfR_i$ we can equivalently state
\begin{align*}
    \bfc^T \; = \;  \bfb^T \; \; \left[ \mathbf{\Omega}_0 \; \; \; \; \mathbf{\Omega}_1 \; \; \; \; \dots \; \; \; \; \mathbf{\Omega}_{\omega-2}   \; \; \; \; \mathbf{\Omega}_{\omega-1} \right] \; 
\end{align*}where $\omega = \Bigl\lceil{ \frac{\sigma}{k_B} \Bigr\rceil} = 1 +  \Bigl\lceil{ \frac{s_m}{k_B} \Bigr\rceil}$. 

Note that the non-zero entries in the columns in any $\mathbf{\Omega}_k$ are cyclically shifted. We are trying to determine the neighborhood of any $m$ elements of $\calC$. Using the arguments in the proof of Theorems 7 and 8 in \cite{das2020coded}, we can show that for $\tilde{\calC}$, such that $|\tilde{\calC}| = m$, we have 
\begin{align*}
    |\calN (\tilde{\calC}) | = \min\; \left(\zeta + \Bigl\lceil{\frac{m}{\omega}\Bigr\rceil} - 1, k_B \right) .
\end{align*}However, we need $|\calN (\tilde{\calC}) | \geq m$; which indicates that 
\begin{align*}
    \zeta \geq 1 + m - \Bigl\lceil{\frac{m}{\omega}\Bigr\rceil} \; ; \; \textrm{for $m = 1, 2, \dots, k_B$} .
\end{align*}Since $s_m$ and $k_B$ are given, $\omega = 1 + \Bigl\lceil{\frac{s_m}{k_B}\Bigr\rceil}$ is constant for any given parameters. Now $1 + m - \Bigl\lceil{\frac{m}{\omega}\Bigr\rceil}$ is an increasing sequence for integer $m \leq k_B$, so we need to set $\zeta \geq 1 + k_B - \Bigl\lceil{ \frac{k_B}{\omega} \Bigr\rceil}$. Thus for any $\tilde{\calC}$, we have  $|\calN (\tilde{\calC})| \geq |\tilde{\calC}|$, which indicates that we have a matching. Since the entries are chosen randomly from a continuous distribution, we can say that any $k_B \times k_B$ submatrix of $\bfR_i$ is full rank.
\end{proof}

\subsection{Proof of Lemma \ref{lem:Gmmproof}}
\label{sec:Gmmproof}

\begin{proof}
The matrix $\bfG \, = \, \bfG_{A} \, \odot \, \bfG_{B}$ has a size $k_A k_B \times n$. We can pick any arbitrary $\tau$ columns of $\bfG$ which provides us with a $k_A k_B \times \tau$ submatrix, $\bfG_{sub}$. Let us choose the corresponding columns from $\bfG_A$ and $\bfG_B$ to form $G_{A_{sub}}$ and $G_{B_{sub}}$, respectively, so that $\bfG_{sub}\, = \, \bfG_{A_{sub}} \odot \bfG_{B_{sub}}$.

First we partition $\bfG_A$ into $k_A+1$ block columns, and denote them as $b_0, b_1, \dots, b_{k_A}$, where each of the first $k_A$ block columns have $k_B$ columns each and $b_{k_A}$ has $s_m$ columns. We denote $e_i$  (for $i = 0, 1, \dots, k_A - 1$) as a vector of length $k_A$, whose $i$-th element is $1$ and other elements are $0$. Thus for $i = 0, 1, \dots, k_A - 1$, we can say that the columns in $b_i$ are $e_i$. We assume that $\delta_i$ is the number of columns which are missing from the block column $b_i$ of $\bfG_A$ to form $\bfG_{A_{sub}}$, so $\sum_{i=0}^{k_A-1} \delta_i \leq s$. Thus we have $ \eta = \sum_{i=0}^{k_A-1} \delta_i + x$ columns in $\bfG_{A_{sub}}$ from $b_{k_A}$ of $\bfG_A$. 

Next from Lemma \ref{lem:new_opt_matmat}, part (ii), we have at least $s = s_m - x$ appearances of any $\bfA_i \in \calC_m$ in block $b_{k_A}$ in $\bfG_A$. Thus, out of those chosen $\eta$ columns in block $b_{k_A}$, every such $\bfA_i$ will appear in at least $\eta_0 = \eta - x = \sum_{i=0}^{k_A-1} \delta_i$ columns. 

Now, let us choose the columns of $\bfG_{A_{sub}}$ which are from $b_0$.  Next choose $\delta_0 \leq \eta_0$ such columns of $\bfG_{A_{sub}}$ which are from $b_{k_A}$ of $\bfG_A$ having non-zero entries at index $0$. We set each of those $\delta_0$ columns as $e_0$. Thus we have $k_B$ columns each of which is $e_0$, and after the Khatri-Rao Product with the corresponding $k_B$ columns from $\bfG_{B_{sub}}$, we have a matrix having the following form
\begin{align*}
\bfu_0 \, = \underbrace{\begin{bmatrix}
\, \bfR^0_{k_B,k_B} & \textbf{0}_{k_B,k_B} & \textbf{0}_{k_B,k_B} & \dots & \textbf{0}_{k_B,k_B}
\end{bmatrix}^T}_{k_A \, \textrm{blocks}} 
\end{align*} where $\bfR^0_{k_B, k_B}$ is obtained by taking some $k_B$ columns from $\sigma = k_B + s_m$ columns corresponding $\bfA_{m_0}$ (which is the first member of $\calC_m$), for the case $x = 0$, as described in Claim \ref{claim:appearB} and Lemma \ref{lem:graphtheory}. So, $\bfR^0_{k_B, k_B}$ is full rank.

Similarly, we can choose the columns of $\bfG_{A_{sub}}$ which are from $b_1$.  Next choose $\delta_1 \leq \eta_0 - \delta_0$ more columns of $\bfG_{A_{sub}}$ which are from $b_{k_A}$ of $\bfG_A$  having non-zero entries at index $1$. We set each of those $\delta_1$ columns as $e_1$. Thus we have $k_B$ columns each of which is $e_1$, and after the Khatri-Rao Product with the corresponding $k_B$ columns from $\bfG_{B_{sub}}$, we have a matrix having the following form
\begin{align*}
\bfu_1 \, = \underbrace{\begin{bmatrix}
 \, \textbf{0}_{k_B,k_B} & \bfR^1_{k_B,k_B} & \textbf{0}_{k_B,k_B} & \dots & \textbf{0}_{k_B,k_B}
\end{bmatrix}^T}_{k_A \, \textrm{blocks}}
\end{align*} where $\bfR^1_{k_B, k_B}$ is obtained by taking some $k_B$ columns from $\sigma = k_B + s_m$ columns for $\bfA_1$, for the case $x = 0$, as described in Claim \ref{claim:appearB} and Lemma \ref{lem:graphtheory}. So, $\bfR^1_{k_B, k_B}$ is full rank.

We can continue the similar process for $b_2, b_3, \dots, b_{k_A - 1}$, and we can show that we will have enough remaining columns even for $b_{k_A-1}$, since $\eta_0 = \sum_{i=0}^{k_A-1} \delta_i$. Thus in this way, we can show that we have a $k_A \times k_A$ block diagonal matrix, $\bfu = [\bfu_0 \;\; \bfu_1  \; \; \dots \; \bfu_{k_A - 1}]$, each of whose diagonal blocks is of size $k_B \times k_B$ and of full rank, thus the whole block diagonal matrix is full rank. 

Finally, as there exists a choice of values that makes the chosen $k_A k_B \times k_A k_B$ submatrix of $\bfG_{sub}$ nonsingular, it continues to be nonsingular with probability 1 under a random choice. It should be noted that we will have some additional $x = \tau - k_A k_B$ columns in $\bfG_{sub}$, and thus any $k_A k_B \times \tau$ submatrix of $\bfG$ has a rank $k_A k_B$ with probability $1$.    
\end{proof}

\subsection{Proof of Claim \ref{claim:Qprev}}
\label{app:proofclaim3}

\begin{proof}
Consider any worker group $\calG_{\lambda}$, $\lambda =0, 1, \dots, c - 1$ consisting of $\ell$ worker nodes. The submatrices corresponding to $\calC_m$ appear in all different locations, $0, 1, \dots, \ell-1$ (cf. Lemma \ref{lem:new_opt_matmat}), in $\calG_{\lambda}$. Let $\alpha_0$ be the maximum number of submatrix-products that can be processed across all worker nodes of $\calG_{\lambda}$ such that none of those submatrices from $\calC_m$ is processed even once. Then, from Fig. \ref{proof_Q} it can be seen that 
\begin{align}
\label{eq:alpha} 
\alpha_0 = 0 + 1 + 2 + \dots + \ell - 1 = \frac{\ell(\ell-1)}{2}.
\end{align}
Now we know that we can process $\alpha_0$ submatrix-products from each of the worker groups without processing any submatrix-product corresponding to $\calC_m$. Any additional processing will necessarily process a submatrix-product corresponding to $\calC_m$. Suppose we choose any particular worker, where the position index of $\calC_m$ is $i$. In that case, we can acquire at most $\ell - 1 - i$ more symbols (i.e., submatrix-products) from that particular worker without any more appearances of $\calC_m$. A corresponding example is shown in Fig. \ref{proof_Q}.  

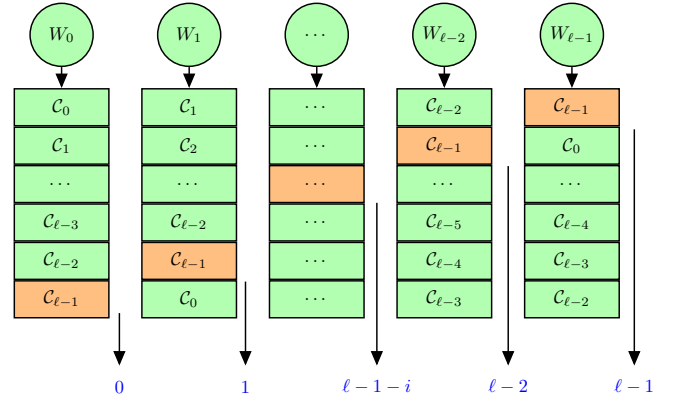
\begin{figure}[t]
\centering
\resizebox{0.99\linewidth}{!}{
\begin{tikzpicture}[auto, thick, node distance=2cm, >=triangle 45]
\draw
    node [sum1, minimum size = 1.2 cm, fill=green!30] (blk1) {$W_0$}
    node [sum1, minimum size = 1.2 cm, fill=green!30,right = 1.2 cm of blk1] (blk2) {$W_1$}
    node [sum1, minimum size = 1.2 cm, fill=green!30,right = 1.2 cm of blk2] (blk4) {$\dots$}
    node [sum1, minimum size = 1.2 cm, fill=green!30,right = 1.2 cm of blk4] (blk5) {$W_{\ell -2}$}

    node [sum1, minimum size = 1.2 cm, fill=green!30,right = 1.2 cm of blk5] (blk01) {$W_{\ell -1}$}

    node [block, minimum width = 1.8cm, fill=green!30,below = 0.4 cm of blk1] (blk11) {$\calC_0$}
    node [block, minimum width = 1.8cm, fill=green!30,below = 0.0005 cm of blk11] (blk12) {$\calC_1$}
    node [block,  minimum width = 1.8cm, fill=green!30,below = 0.0005 cm of blk12] (blk14) {$\dots$}
    node [block,  minimum width = 1.8cm, fill=green!30,below = 0.0005 cm of blk14] (blk15) {$\calC_{\ell - 3}$}
    node [block,  minimum width = 1.8cm, fill=green!30,below = 0.0005 cm of blk15] (blk16) {$\calC_{\ell - 2}$}
    node [block,  minimum width = 1.8cm, fill=orange!50,below = 0.0005 cm of blk16] (blk17) {$\calC_{\ell - 1}$}

    node [block, minimum width = 1.8cm, fill=green!30,below = 0.4 cm of blk2] (blk21) {$\calC_1$}
    node [block, minimum width = 1.8cm, fill=green!30,below = 0.0005 cm of blk21] (blk22) {$\calC_2$}
    node [block,  minimum width = 1.8cm, fill=green!30,below = 0.0005 cm of blk22] (blk24) {$\dots$}
    node [block,  minimum width = 1.8cm, fill=green!30,below = 0.0005 cm of blk24] (blk25) {$\calC_{\ell -2}$}
    node [block,  minimum width = 1.8cm, fill=orange!50,below = 0.0005 cm of blk25] (blk26) {$\calC_{\ell -1}$}
    node [block,  minimum width = 1.8cm, fill=green!30,below = 0.0005 cm of blk26] (blk27) {$\calC_0$}


    node [block, minimum width = 1.8cm, fill=green!30,below = 0.4 cm of blk4] (blk41) {$\dots$}
    node [block, minimum width = 1.8cm, fill=green!30,below = 0.0005 cm of blk41] (blk42) {$\dots$}
    node [block,  minimum width = 1.8cm, fill=orange!50,below = 0.0005 cm of blk42] (blk44) {$\dots$}
    node [block,  minimum width = 1.8cm, fill=green!30,below = 0.0005 cm of blk44] (blk45) {$\dots$}
    node [block,  minimum width = 1.8cm, fill=green!30,below = 0.0005 cm of blk45] (blk46) {$\dots$}
    node [block,  minimum width = 1.8cm, fill=green!30,below = 0.0005 cm of blk46] (blk47) {$\dots$}

    node [block, minimum width = 1.8cm, fill=green!30,below = 0.4 cm of blk5] (blk51) {$\calC_{\ell - 2}$}
    node [block, minimum width = 1.8cm, fill=orange!50,below = 0.0005 cm of blk51] (blk52) {$\calC_{\ell - 1}$}
    node [block,  minimum width = 1.8cm, fill=green!30,below = 0.0005 cm of blk52] (blk54) {$\dots$}
    node [block,  minimum width = 1.8cm, fill=green!30,below = 0.0005 cm of blk54] (blk55) {$\calC_{\ell - 5}$}
    node [block,  minimum width = 1.8cm, fill=green!30,below = 0.0005 cm of blk55] (blk56) {$\calC_{\ell - 4}$}
    node [block,  minimum width = 1.8cm, fill=green!30,below = 0.0005 cm of blk56] (blk57) {$\calC_{\ell - 3}$}
    
    node [block, minimum width = 1.8cm, fill=orange!50,below = 0.4 cm of blk01] (blk011) {$\calC_{\ell - 1}$}
    node [block, minimum width = 1.8cm, fill=green!30,below = 0.0005 cm of blk011] (blk012) {$\calC_{0}$}
    node [block,  minimum width = 1.8cm, fill=green!30,below = 0.0005 cm of blk012] (blk014) {$\dots$}
    node [block,  minimum width = 1.8cm, fill=green!30,below = 0.0005 cm of blk014] (blk015) {$\calC_{\ell - 4}$}
    node [block,  minimum width = 1.8cm, fill=green!30,below = 0.0005 cm of blk015] (blk016) {$\calC_{\ell - 3}$}
    node [block,  minimum width = 1.8cm, fill=green!30,below = 0.0005 cm of blk016] (blk017) {$\calC_{\ell - 2}$}

    
node at (1.1, -6.7) {{\color{blue} $0$}}
node at (3.5, -6.7) {{\color{blue} $1$}}
node at (6, -6.7) {{\color{blue} $\ell - 1 - i$}}
node at (8.5, -6.7) {{\color{blue} $\ell - 2$}}
node at (10.9, -6.7) {{\color{blue} $\ell - 1$}}

    ;

\draw[->](1.1,-5.3) -- (1.1,-6.3);
\draw[->](3.5,-4.7) -- (3.5,-6.3);
\draw[->](6,-3.2) -- (6,-6.3);
\draw[->](8.5,-2.5) -- (8.5,-6.3);
\draw[->](10.9,-1.8) -- (10.9,-6.3);

\draw[->](blk1) -- node{} (blk11);
\draw[->](blk2) -- node{} (blk21);
\draw[->](blk4) -- node{} (blk41);
\draw[->](blk5) -- node{} (blk51);
\draw[->](blk01) -- node{} (blk011);

\end{tikzpicture}
}
\centering
\caption{\small We partition the given $n$ worker nodes into $c = n/\ell$ disjoint groups of worker nodes. We show a group of $\ell$ worker nodes (from $W_0$ to $W_{\ell -1}$, shown in green) where any class (without loss of generality, assume $\calC_{\ell - 1}$) appears exactly $\ell$ times, in all $\ell$ different locations. Once we acquire the symbol $\calC_{\ell - 1}$ from location $i$, we can acquire $(\ell - 1 - i)$ more symbols from that worker node without any further appearances of $\calC_{\ell - 1}$. We show the corresponding numbers for the selected worker group.}

\label{proof_Q}
\end{figure} 

This is true for all $c =  n/ \ell$ worker groups, since in each group any class appears at all locations. Thus, the maximum number of submatrix-products that can be processed for each additional appearance of $\calC_m$ can be expressed by the following vector.
\begin{align}
\label{eq:z}
    \bfz =\left( \undermat{c}{\ell, \ell, \dots, \ell}, \undermat{c}{\ell - 1, \dots, \ell - 1}, \dots, \undermat{c}{1, 1, \dots, 1}\right) .
\end{align}

\vspace{0.2 in}
Here $\bfz$ is a non-increasing sequence, so in order to obtain the maximum number of submatrix-products where $\calC_m$ appears at most $\kappa - 1$ times, we need to acquire submatrix-products sequentially as mentioned in $\bfz$. Let $c_1 = \floor{\frac{\kappa - 1}{c}}$ and $c_2 = \kappa - 1 - c c_1$; so $c c_1 + c_2 = \kappa - 1$. Then we can choose the first $\kappa - 1$ workers (as mentioned in $\bfz$) so that we can have $\eta$ symbols where $\calC_m$ appears exactly $\kappa - 1$ times, so 
\begin{align*}
 \eta & = c \alpha_0 +  \sum\limits_{i=0}^{\kappa - 2} \bfz[i] \\
 & = \frac{n (\ell - 1 )}{2} + c \sum\limits_{i=0}^{c_1 - 1} (\ell - i) + c_2 (\ell - c_1)  .
\end{align*} This concludes the proof.
\end{proof}

\subsection{Proof of Theorem \ref{thm:matmatq}}
\label{app:proofQ}
\begin{proof}
We assume that there is such $\bfA_i^T \bfB_j$ which can still not be recovered after some certain $Q_{ub}$ symbols are acquired, where $Q_{ub}$ is defined in the theorem statement. Assume that $\bfA_i \in \calC_m$. Now from Theorem \ref{thm:matmatstr}, we know that any $\tau = k_A k_B + x$ out of those $n$ submatrix-products corresponding to $\calC_m$ are enough to recover all the corresponding unknowns. So, to prove the upper bound for $Q$, we try to find the maximum number of block products ($Q^{'}$) which can be acquired over all the worker groups where there are at most $\tau-1$ appearances of $\calC_m$. According to  Claim \ref{claim:Qprev}, $Q'$ is given by
\begin{align*}
 Q^{'} = \frac{n (\ell - 1 )}{2} + c \sum\limits_{i=0}^{c^x_1 - 1} (\ell - i) + c^x_2 (\ell - c^x_1) ;
\end{align*}where $c^x_1 = \floor{\frac{\tau - 1}{c}}$ and $c^x_2 = \tau - 1 - c c^x_1$ with $\tau = k_A k_B + x$. It indicates that we can recover all $k_A k_B$ such unknowns corresponding to $\calC_m$ if we acquire any $Q = Q'+1$ submatrix-products over $n$ workers. Thus according to Theorem \ref{thm:matmatstr}, we can recover the unknown $\bfA_i^T \bfB_j$ from any $Q_{ub}$ submatrix-products, which concludes the proof for upper bound.

On the other hand, to prove the lower bound, we pick a particular submatrix $\bfA_{\Delta_A - 1}$ and show a certain pattern of $Q_{lb} - 1$ block-products from which $\bfA_{\Delta_A - 1}^T \bfB_j$ cannot be decoded, for any $j = 0, 1, 2, \dots, k_B - 1$. Note that $\bfA_{\Delta_A - 1} \in \calC_{\ell - 1}$ (since $\ell$ divides $\Delta_A$). 

To form that pattern of $Q_{lb} - 1$ block-products, first we choose $c \alpha_0 = \frac{n (\ell - 1)}{2}$ block-products from all $c$ worker groups where $\calC_{\ell - 1}$ does not appear. Since according to Corollary \ref{cor:matmat}(i), $\calC_{\ell - 1}$ appears at $k_A k_B$ worker nodes in an uncoded fashion, where we know that the uncoded locations of $\calC_{\ell - 1}$ in every worker group are $0, 1, \dots, p - 1$. Thus, we can acquire at most $M$ more symbols where 
\begin{align*}
    M = \left[\ell + (\ell - 1) + (\ell - 2) + \dots + (\ell - p + 1)\right] c
\end{align*} such that $\calC_{\ell - 1}$ appears only in an uncoded fashion. This is because $c \times p = k_A k_B$, where $p = \Delta/n$. However, we know that $\bfA_{\Delta_A - 1}$ appears at location $p-1$  at worker node $W_{n - k_B}$ according to the scheme.
Thus, we can construct a pattern where $\bfA_{\Delta_A - 1}$ appears exactly $k_B - 1$ times by removing $\ell - p + 1$ symbols from worker $W_{n-k_B}$. Let
\begin{align*}
    Q^{''} &=c\alpha_0 + M - (\ell-p+1)\\
    &= c \alpha_0 + \ell + (\ell - 1) + (\ell - 2)  \\
    &+ \dots + (\ell - p + 2) c + (c - 1)(\ell - p + 1) \\
    & = \frac{n (\ell - 1 )}{2} + c \sum\limits_{i=0}^{p - 2} (\ell - i) + (c  - 1) (\ell - p  + 1) \\
    & = \frac{n (\ell - 1 )}{2} + c \sum\limits_{i=0}^{c^0_1 - 1} (\ell - i) + c^0_2 (\ell - c^0_1) ;
\end{align*} since $c^0_1 = \Bigl\lfloor{\frac{k_A k_B - 1}{c} \Bigr\rfloor} = \Bigl\lfloor{\frac{k_A k_B}{c} - \frac{1}{c}\Bigr\rfloor} = p - 1$ and $c^0_2 = k_A k_B - 1 - c c_1^0 = c - 1$. Then, we have shown a pattern such that from $Q^{''}$ symbols we cannot recover the unknowns corresponding to  $\bfA_{\Delta_A - 1}$.

Now according to Claim \ref{claim:diffV} and the proof of Lemma \ref{lem:new_opt_matmat}, $\bfA_{\Delta_A - 1}$ appears at $\floor{\mu_{\ell - 1}} = s_m - \Bigl\lceil{ \frac{s_m y}{k_A}\Bigr\rceil}$ coded submatrices of $\calC_{\ell - 1}$ ($\mu_m$ is defined as the average of the coded appearances of all the submatrices of class $\calC_m$ and its value is given in \eqref{eq:avg_mu}). Since $\calC_{\ell - 1}$ appears at $s_m$ locations, it indicates that there are an additional $s_m - \floor{\mu_m} = \Bigl\lceil{ \frac{s_m y}{k_A}\Bigr\rceil}$ coded submatrices where $\bfA_{\Delta_A - 1}$ does not appear. 

Thus, we finally form a pattern of $Q^{''} + \Bigl\lceil{ \frac{s_m y}{k_A}\Bigr\rceil} = Q_{lb} - 1$ symbols where $\bfA_{\Delta_A - 1}$ appears exactly $k_B - 1$ times, where $y = \Bigl\lfloor{ \frac{k_A x}{s_m}\Bigr\rfloor}$. But there are $k_B$ unknowns corresponding to $\bfA_{\Delta_A} - 1$ in the form of $\bfA_{\Delta_A - 1}^T \bfB_j$, so we cannot decode the $\bfA_{\Delta_A - 1}^T \bfB_j$'s from this specific pattern of $Q_{lb} - 1$ symbols. This concludes the proof for the lower bound of $Q$.

\end{proof} 

\ifCLASSOPTIONcaptionsoff
  \newpage
\fi

\bibliographystyle{IEEEtran}
\bibliography{citations}
\end{document}